\newtheorem{theorem}{Theorem}
\newtheorem{claim}{Claim}
\newtheorem{definition}{Definition}
\newtheorem{observation}{Observation}
\newtheorem{lemma}{Lemma}
\newtheorem{proposition}{Proposition}
\newcommand{\cI}{\mathcal{I}}
\newcommand{\bZ}{\mathbb{Z}}
\newcommand{\cN}{\mathcal{N}}
\newcommand{\bv}{\mathbf{v}}
\newcommand{\bc}{\mathbf{c}}
\newcommand{\bs}{\mathbf{s}}
\date{}
\title{Bin Packing and Covering: Pushing the Frontier on the Maximin Share Fairness}
\author{Bo Li\thanks{Department of Computing, The Hong Kong Polytechnic University, China. comp-bo.li@polyu.edu.hk} \and
Ankang Sun\thanks{Department of Computing, The Hong Kong Polytechnic University, China ankang.sun@polyu.edu.hk} \and  Zunyu Wang\thanks{Department of Computing, The Hong Kong Polytechnic University, China. zunyu1.wang@connect.polyu.hk} \and
Yu Zhou\thanks{Institute of AI and Future Networks, Beijing Normal University, China. yu.zhou@bnu.edu.cn}}
\begin{document}
\maketitle

\begin{abstract}
We study a fundamental fair allocation problem, where the agent's value is determined by the number of bins either used to pack or cover the items allocated to them. 
Fairness is evaluated using the maximin share (MMS) criterion.
This problem is not only motivated by practical applications, but also serves as a natural framework for studying group fairness.
As MMS is not always satisfiable, we consider two types of approximations: cardinal and ordinal. 
For cardinal approximation, we relax the requirements of being packed or covered for a bin, and for ordinal approximation, we relax the number of bins that are packed or covered.
For all models of interest, we provide constant approximation algorithms. 
\end{abstract}

\section{Introduction}

Fair allocation of indivisible items has recently attracted considerable interest in the fields of computational economics \citep{DBLP:journals/ior/AkramiACGMM25}, artificial intelligence \citep{DBLP:journals/ai/SeddighinS24}, and theoretical computer science \citep{DBLP:conf/stoc/GargMQ25}. 
Most theoretical work focuses on broad valuation classes, including additive, submodular, and subadditive valuations \citep{DBLP:conf/soda/AkramiG24,seddighin2025beating}.
Motivated by real-world applications, there is a recent research trend on how to integrate the combinatorial structure of items into the design of fair algorithms.
Various models have been investigated, such as the order delivery problem \citep{DBLP:conf/ijcai/HosseiniNW24}, budget-feasible constraints \citep{DBLP:conf/sigecom/Barman0SS23}, and scheduling constraints \citep{DBLP:journals/corr/abs-2402-04353}.

In this paper,  we revisit the model introduced in \citep{DBLP:conf/nips/0037WZ23}, where items are allocated to agents who have bin packing costs \citep{DBLP:journals/dam/MartelloT90,coffman2013bin}. 
This model captures a broad range of real-world applications.
For example, when cloud providers allocate computing resources (RAM, CPU, storage) to different users or tenants.
Users want to minimize the number of virtual machines used to run their tasks.
A similar situation arises in logistics, where parcels are distributed among delivery agents. Each agent seeks to minimize the number of trucks needed to complete their deliveries. 
Formally, the problem can be modeled as allocating $m$ indivisible chores (interpreted as items) among $n$ agents, where the size of each item is agent-specific. 
Upon receiving a bundle of items, each agent must pack them into bins such that the total size in any bin does not exceed a given capacity. The cost incurred by an agent is defined as the minimum number of bins required to pack all her allocated items feasibly.

\citet{DBLP:conf/nips/0037WZ23} considered the maximin share (MMS) fairness introduced by \citet{budish2011combinatorial}.
The criterion is defined through a thought experiment: each agent divides all items into 
$n$ bundles, but only receives the bundle with the largest cost, where $n$ is the number of agents. The MMS of the agent is the smallest cost they can guarantee themselves in this experiment.
For the bin packing model, the MMS is essentially the minimum $\kappa$ such that all items can be partitioned into $n$ bundles so that the items in every bundle can be packed into $\kappa$ bins.
An allocation is regarded as MMS fair if every agent's cost is at most her MMS.
However, \citet{DBLP:conf/nips/0037WZ23} demonstrate that MMS fair allocations may not always exist. As a result, they study an ordinal relaxation of MMS fairness. 
An allocation is said to be $\alpha$-ordinal-approximate MMS fair if the number of bins each agent uses to pack the assigned items does not exceed $\alpha$ times her MMS.
They show that a simple round-robin algorithm achieves a 2-ordinal approximation, and this ratio is tight. Nevertheless, they observe that allowing a small constant additive loss enables an improved multiplicative approximation ratio of $\frac{3}{2}$. 
In this work, we further improve the asymptotic ordinal approximation ratio.
Furthermore, we propose to study the {\em cardinal} approximation. 
Informally, an allocation is $\alpha$-cardinal-approximate MMS fair if each agent can pack the assigned items using MMS bins, with each bin exceeding its capacity by at most a factor of $\alpha$. 

While the bin packing model has been studied, the dual problem -- bin covering model where items are goods \citep{DBLP:journals/jal/AssmannJKL84} -- has not been explored. 
That is, upon receiving a set of items, the agent uses them to cover bins, where a bin is considered covered if the total size of items in it meets or exceeds the bin's capacity. 
Here, the agent's utility is the maximum number of bins they are able to cover. 
This setting also widely appears in the real world.
For example, consider the medical kit assembly problem, where hospitals are allocated medical supplies to assemble emergency or surgical kits. A kit is considered usable only when it includes all essential items meeting or exceeding a required threshold (e.g., a complete set of PPE).
Similarly, in IT hardware deployment problem, an IT department receives various hardware components (CPUs, RAM, drives) and wants to assemble as many fully functional computers (bins) as possible, each meeting a minimum specification.
For the bin covering model, the MMS of each agent is the maximum $\kappa$ such that all items can be partitioned into $n$ bundles, with the items in each bundle capable of covering $\kappa$ bins.
% so that the items in every bundle can cover $\kappa$ bins. 
Accordingly, an MMS fair allocation requires every agent's utility to be no smaller than MMS.
In this work, for the first time, we study the ordinal and cardinal approximations of MMS fairness for the bin covering model.

The bin packing and covering models not only reflect practical applications with such combinatorial structures, but also provide a natural model to study group fairness by treating agents as groups and bins as individual group members. 
The cardinal approximation quantifies how much each member may be overburdened or undercompensated, while the ordinal approximation seeks to minimize the number of additional members required -- ensuring no member is over-utilized -- or to maximize the number of fully compensated members.

\subsection{Main Contribution}

\begin{table}[]
    \centering
    \renewcommand{\arraystretch}{1.2}
    \begin{tabular}{c|cc}
        Setting & Ordinal Approximation & Cardinal Approximation \\
        \hline
        Bin Covering & $\frac{3}{4}$ (Theorem \ref{thm:cover:ordinal})  & $\frac{2}{3}$ (Theorem \ref{thm:cover:cardinal})\\
        Bin Packing & $\frac{4}{3}$ (Theorem \ref{thm::4/3-chore}) & $\frac{4}{3}$ \citep{DBLP:conf/icml/LiWX}
    \end{tabular}
    \caption{Main results:
    For the results on ordinal approximations, we admit a small constant additive loss. 
    }
    \label{tab:main-results}
\end{table}

In this paper, we study the extent to which MMS fairness can be satisfied in the bin covering and packing models. 
The MMS criterion is relaxed in two dimensions, either cardinal or ordinal.
Our algorithmic results are summarized in Table \ref{tab:main-results}.

\paragraph{Bin Covering Model}
We start with the bin covering model, where the items are goods and bring non-negative utilities to the agents.
We first observe that exact MMS cannot be satisfied even when there are two agents. 
That is, there are instances with two agents for which no matter how items are allocated, there is at least one agent whose items can cover strictly fewer bins than their MMS.
This result shows a sharp contrast to the traditional additive setting, for which an MMS allocation always exists for two agents. 
We then focus on the approximations.

We first design an algorithm that guarantees $\frac{2}{3}$-cardinal-approximate MMS fair: it always computes an allocation where each agent can use their assigned items to cover their MMS number of bins, with the total size in each bin reaching at least $\frac{2}{3}$ of its capacity. 
We propose an algorithm that, at each step, selects an agent to partition the remaining items into subsets, with the number of subsets equal to the number of remaining agents.
Then, the algorithm matches the subsets to the agents in a way that each unmatched agent is not satisfied with any of the matched subsets.
This idea has been considered in the literature \citep{DBLP:journals/isci/Aigner-HorevS22}, but our algorithm differs in the following two aspects: (1) instead of arbitrarily selecting an agent, we choose a specific remaining agent according to a certain rule, and (2) our algorithm imposes constraints on the subsets partitioned by the selected agent.
These adaptions are necessary as simply applying the existing algorithm does not a guarantee $\frac{2}{3}$-cardinal-approximate MMS fair solution.

Next, we consider the ordinal approximation, but find that it is impossible to guarantee any non-zero approximation in general. 
This limitation occurs in cases where each agent's MMS is 1, yet in every possible allocation, there exists at least one agent whose items cannot cover even a single bin. 
However, as the MMS value increases, the approximation improves significantly. 
This observation leads us to consider allowing a small constant additive loss in the approximation. 
Based on this, we design an algorithm that guarantees each agent receives utility of at least
$\frac{3}{4}\kappa_i - \frac{7}{4}$, where $\kappa_i $ is agent $a_i$'s MMS.

\paragraph{Bin Packing Model}

For the bin packing model, the items can be viewed as chores and bring cost to the agents.
For the cardinal approximation, we observe an equivalent relationship with the \textit{homogeneous} job scheduling model studied in \citep{DBLP:conf/icml/LiWX}.
\footnote{Given any bin packing instance, 
we can transfer it into a homogeneous job scheduling instance by setting the number of machines owned by each group to the MMS value of the corresponding agent while keeping the item sizes unchanged. 
One can see that the MMS value (i.e., the makespan) of each group in the job scheduling instance is equal to the capacity of the bins owned by the agent. 
Accordingly, an $\alpha$-GMMS allocation in the job scheduling instance leads to an $\alpha$-CMMS allocation in the bin packing instance.
For the other direction, we can transfer any homogeneous job scheduling instance to a bin packing one by setting the MMS value of each group to the capacity of the bins owned by the corresponding agent.}
Thus, the results in \citep{DBLP:conf/icml/LiWX} implies a $\frac{4}{3}$-cardinal-approximation.
For the ordinal approximation, we prove that there always exists an allocation in which each agent $a_i$ has a cost no greater than $\frac{4}{3}\kappa_i+\frac{4}{3}$, where $\kappa_i$ is the MMS of $a_i$.
For any $\kappa_i \geq 2$, our result improves upon that in \citet{DBLP:conf/nips/0037WZ23} which showed the existence of an allocation where each $a_i$ has cost at most $\frac{3}{2}\kappa_i+1$.
This improvement relies on a different classification of the items and a more refined analysis on the structure of the MMS partition of agents.

\subsection{Other Related Work}
\label{ap:related_work}
The problem of fair division has been well studied since 1948 when \citet{steihaus1948problem} first formalized the fairness notion of proportionality (PROP), where each agent receives at least $\frac{1}{n}$ of his total value for all items.
Another popular fairness notion is envy-freeness (EF) \citep{foley1967resource}, where no agent wants to exchange her bundle with any other agent's.
For indivisible items, PROP and EF are hard to satisfy, which gives rise to their relaxations, such as {envy-freeness up to one item} (EF1) \citep{DBLP:conf/sigecom/LiptonMMS04,budish2011combinatorial}, {envy-freeness up to any item}
(EFX) \citep{DBLP:conf/ecai/GourvesMT14,DBLP:journals/teco/CaragiannisKMPS19}, and {maximin share} (MMS) \citep{budish2011combinatorial}.
EF1 allocations always exist for both goods and chores with monotone valuations \citep{DBLP:conf/sigecom/LiptonMMS04,DBLP:conf/approx/BhaskarSV21}.
MMS allocations may not exist even for additive valuations \citep{DBLP:conf/aaai/KurokawaPW16,DBLP:journals/jacm/KurokawaPW18} and the existence of EFX allocations is still unknown.
Accordingly, the approximations of MMS and EFX are extensively studied by follow-up works, e.g., \citep{DBLP:conf/soda/AkramiG24,DBLP:conf/sigecom/GhodsiHSSY18,DBLP:journals/teco/BarmanK20} for goods and \citep{DBLP:journals/ai/ZhouW24,DBLP:conf/sigecom/HuangS23,DBLP:conf/sigecom/HuangL21,DBLP:journals/corr/abs-2407-03318} for chores.
Besides additive valuations, the approximation of MMS with submodular, XoS and subadditive valuations has been studied in \citep{DBLP:journals/teco/BarmanK20,DBLP:conf/sigecom/GhodsiHSSY18,DBLP:journals/corr/abs-2303-12444,DBLP:conf/nips/AkramiMSS23,DBLP:journals/ai/SeddighinS24} for goods and in \citep{DBLP:conf/nips/0037WZ23} for chores.

\section{Preliminaries} \label{sec:firstpage}
For any integer $k \in \bZ^+$, denote $[k]=\{1,\ldots,k\}$.
We consider the problem of allocating $m$ indivisible items 
$M=\{e_1,\ldots,e_m\}$ among $n$ agents $N= \{ a_1,\ldots,a_n\}$.
The agents have subjective opinions on the sizes of items, and the size of item $e_j$ to agent $a_i$ is denoted as $s_{i,j}$.
Without loss of generality, assume that the bin capacity of every agent is 1 and $0<s_{i,j}\le 1$ for all $i,j$.
For simplicity, let $s_i(S) = \sum_{e_j \in S}s_{i,j}$ and $s_i(e_j) =s_{i,j}$.
Let $\bs = (s_1,\ldots,s_n)$.
A fair allocation instance is denoted by $\cI = (N, M, \bs)$.
An allocation, denoted by $\mathbf{A}=(A_1,\ldots, A_n)$, is an $n$-partition of $M$ where $A_i$ is the items allocated to agent $a_i$ such that $A_1\cup\cdots\cup A_n = M$ and $A_i\cap A_j = \emptyset$ for all $i\neq j$.
When items are goods, agents have bin covering valuations $v_i:2^M\to \mathbb{Z}_{\ge 0}$, where $v_i(S)$ is the maximum number of bins that can be covered by items in $S \subseteq M$, i.e., the maximum $\tau\ge 0$ such that there is a $\tau$-partition $(S_1, \ldots, S_{\tau})$ such that 
$s_i(S_j) \ge 1$ 
for all $j$.
When items are chores, agents have bin packing cost functions $c_i:2^M\to \mathbb{Z}_{\ge 0}$, where $c_i(S)$ is the minimum number of bins that can pack all items in $S$, i.e., the minimum $\tau\ge 0$ such that there is a $\tau$-partition $(S_1, \ldots, S_{\tau})$ such that 
% $\sum_{e_l \in S_j} s_{i,l} \le 1$ 
$s_i(S_j) \le 1$
for all $j$.
Denote by $\bv=(v_1,\ldots,v_n)$ and $\bc=(c_1,\ldots,c_n)$.

We consider the maximin share (MMS) fairness.
For any positive integer $k$ and set of items $S$, denote by $\Pi(S,k)$ the set of all $k$-partitions of $S$.
For the bin covering model, agent $a_i$'s MMS is defined as
\[
\kappa_i = \max_{(X_1,\ldots,X_n)\in \Pi(M,n)}\min_{j\in [n]} v_i(X_j).
\]
That is, $\kappa_i$ is the maximum $\kappa$ such that $a_i$ can partition all items into $n$ bundles and every bundle can cover no smaller than $\kappa$ bins.
A partition that reaches $\kappa_i$ is called an MMS partition of $a_i$.
An allocation $(A_1, \ldots, A_n)$ is MMS fair if $v_i(A_i) \ge \kappa_i$ for all agents.
It is worth noting that an MMS fair allocation may not exist even for instances with only two agents, which shows a sharp contrast to the traditional additive setting. 
\citet{DBLP:conf/wine/FeigeST21} designed a hard instance for four agents with additive valuations, where the first two agents have the same valuation and so do the other two agents. 
Every agent can partition all the items into four bundles and each bundle is valued exactly 1 (after normalization). 
However, no matter how the items are allocated, there is at least one agent whose value is strictly smaller than 1.
This example can be adapted to the bin covering setting with two agents, where the values correspond to item sizes and each bin has a capacity of 1.
In this case, the MMS values of both agents are 2, however, there will always be an agent who cannot cover two bins.

Therefore, we consider two ways to relax the requirements of MMS fairness.

\begin{definition}
    Given a bin covering instance and $0\le \alpha \le 1$,
    an allocation $(A_1, \ldots, A_n)$ is $\alpha$-cardinal-approximate MMS fair ($\alpha$-CMMS) if every agent $a_i$ can partition $A_i$ into $\kappa_i$ bundles $(A_{i,1},\ldots,A_{i,\kappa_i})$ such that 
    % $\sum_{e_{l} \in A_{i,j}} s_{i,l} \ge \alpha$
    $s_i(A_{i,j}) \ge \alpha$
    for all $j \in [\kappa_i]$.
\end{definition}

\begin{definition}
\label{def:goods:ordinal}
    Given a bin covering instance and $0\le \alpha \le 1$,
    an allocation $(A_1, \ldots, A_n)$ is $\alpha$-ordinal-approximate MMS fair ($\alpha$-OMMS) if $v_i(A_i) \ge \alpha\kappa_i - c$ for some constant $c\ge 0$.
\end{definition}

Note that the constant additive loss $-c$ is important in the above definition. 
If $c = 0$, we are not able to have $\alpha>0$, even when $\kappa_i = 1$ and $n=3$.
We can adapt the hard instance with three agents designed by 
\citet{DBLP:conf/wine/FeigeST21} to the bin covering setting with three agents like above. 
Here, there will always be an agent whose items cannot pack a bin, resulting in $\alpha$ being 0 in Definition \ref{def:goods:ordinal}.
Thus, for ordinal approximation, we focus solely on the asymptotic approximation ratio.

\medskip

For the bin packing model, agent $a_i$'s MMS is defined as
\[
\kappa_i = \min_{(X_1,\ldots,X_n)\in \Pi(M,n)}\max_{j\in [n]} c_i(X_j).
\]
That is, $\kappa_i$ is the minimum $\kappa$ such that $a_i$ can partition all items into $n$ bundles and the items in every bundle can be packed into $\kappa$ bins.
Similarly, a partition that reaches $\kappa_i$ is called an MMS partition of $a_i$.
An allocation $(A_1, \ldots, A_n)$ is MMS fair if $c_i(A_i) \le \kappa_i$ for all agents, and cardinal and ordinal approximations are considered to relax the requirements.

\begin{definition}
    Given a bin packing instance and $\alpha \ge 1$,
    an allocation $(A_1, \ldots, A_n)$ is $\alpha$-cardinal-approximate MMS fair ($\alpha$-CMMS) if every agent $a_i$ can partition $A_i$ into $\kappa_i$ bundles $(A_{i,1},\ldots,A_{i,\kappa_i})$ such that 
    % $\sum_{e_{l} \in A_{i,j}} s_{i,l} \le \alpha$ 
    $s_i(A_{i,j}) \le \alpha$
    for all $j \in [\kappa_i]$.
\end{definition}

\begin{definition}
\label{def:chores:ordinal}
    Given a bin packing instance and $\alpha \ge 1$,
    an allocation $(A_1, \ldots, A_n)$ is $\alpha$-ordinal-approximate MMS fair ($\alpha$-OMMS) if $c_i(A_i) \le \alpha\kappa_i + c$ for some constant $c\ge 0$.
\end{definition}

Reading the ordinal approximation, if $c = 0$, it is proved in \citep{DBLP:conf/nips/0037WZ23} that the tight approximation ratio is 2.
But if we allow a small constant additive loss, the multiplicative ratio can be improved to $\frac{3}{2}$.
In our paper, we continue studying the asymptotic bound and improve this ratio to $\frac{4}{3}$.

\medskip

Finally, we call an instance identical ordering (IDO) if all agents agree on an ordering of the items regarding their sizes, i.e., $s_{i,1}\ge s_{i,2} \ge \cdots \ge s_{i,m}$ for all $a_i$.
It is widely known that IDO instances are the hardest to approximate MMS for the instances with additive valuations \citep{DBLP:journals/teco/BarmanK20,DBLP:conf/sigecom/HuangL21}.
This is also true for bin covering and bin packing models, and the proof is almost the same as that in these references.
Thus we state the following and omit the proof.

\begin{lemma}
    If there is an algorithm that ensures $\alpha$-CMMS (or $\alpha$-OMMS) allocations for all IDO instances, there is another algorithm that ensures $\alpha$-CMMS (or $\alpha$-OMMS)  allocations for arbitrary instances.
    Further, if the former algorithm runs in polynomial time, the latter does as well.
\end{lemma}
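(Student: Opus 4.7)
The plan is to reduce an arbitrary instance $\cI=(N,M,\bs)$ to an IDO instance $\cI'$ by sorting each agent's size vector individually. Concretely, I would define $\cI'=(N,M,\bs')$ where for each agent $a_i$ the values $s'_{i,1}\ge s'_{i,2}\ge \cdots \ge s'_{i,m}$ are obtained by sorting $\{s_{i,1},\ldots,s_{i,m}\}$ in non-increasing order. Then $\cI'$ is IDO by construction, and because agent $a_i$'s multiset of sizes is unchanged, the set of $n$-partitions of $M$ evaluated under $s'_i$ is identical (up to relabelling items) to that under $s_i$. Hence $\kappa'_i=\kappa_i$ for all $a_i$ in both the bin covering and bin packing settings.

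Next, I would run the assumed algorithm on $\cI'$ to obtain an allocation $\mathbf{A}'=(A'_1,\ldots,A'_n)$ that is $\alpha$-CMMS (or $\alpha$-OMMS) with respect to $\bs'$, and then convert it to an allocation $\mathbf{A}=(A_1,\ldots,A_n)$ of the original instance via a greedy reassignment. For the \emph{bin covering} (goods) case, I process positions $j=1,2,\ldots,m$ in the IDO order; whenever $e_j\in A'_i$, I remove from the pool of still-unassigned items the one with the largest $s_i$-size and place it in $A_i$. For the \emph{bin packing} (chores) case, I process $j=m,m-1,\ldots,1$ and at each step hand $a_i$ the unassigned item with the \emph{smallest} $s_i$-size. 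In both cases the $|A_i|=|A'_i|$ by construction.

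The crux is a pigeonhole comparison. For covering, at step $j$ the $j-1$ items $e_1,\ldots,e_{j-1}$ have been removed, so among the $j$ items of largest $s_i$-size in $\cI$ (whose sizes are $s'_{i,1}, \ldots, s'_{i,j}$) at least one survives, hence the item we hand to $a_i$ has $s_i$-size at least $s'_{i,j}$. Ordering the items of $A'_i$ by their IDO positions and matching them to the items we place in $A_i$ in order, I obtain a bijection under which each item in $A_i$ is pointwise at least as large (in $s_i$) as its counterpart in $A'_i$. Since $v_i$ is monotone under pointwise size increase (any cover of $A'_i$ yields a cover of $A_i$ via the bijection), we get $v_i(A_i)\ge v_i(A'_i)$ and the required CMMS-partition of $A_i$ also transfers. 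The chores case is symmetric: at step $j$, only $m-j$ items have been removed, so among the $m-j+1$ smallest items in $s_i$-order at least one survives, giving an item of size at most $s'_{i,j}$; consequently $c_i(A_i)\le c_i(A'_i)$ and any CMMS-partition of $A'_i$ lifts to one for $A_i$.

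The main potential obstacle is ensuring the matching argument is stated cleanly enough that it handles both the cardinal and ordinal notions uniformly (and both $v_i$ and $c_i$), but once pointwise dominance of sizes is established, monotonicity of $v_i$ and $c_i$ with respect to item sizes finishes each case. The polynomial-time claim is immediate because sorting and the greedy reassignment both take $O(nm\log m)$ time on top of the IDO-algorithm.
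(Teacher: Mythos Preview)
Your proposal is correct and follows precisely the standard IDO reduction of \citet{DBLP:journals/teco/BarmanK20} and \citet{DBLP:conf/sigecom/HuangL21} that the paper cites in lieu of a proof; the per-agent sorting, the greedy reassignment with the pigeonhole argument, and the transfer of the CMMS/OMMS guarantees via pointwise size dominance are exactly the approach the paper has in mind. The only thing worth noting is that the paper does not actually write out a proof, so your version is a faithful expansion of the omitted argument rather than an alternative to it.
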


In all the subsequent sections, we only consider IDO instances. 

\section{Cardinal Approximation for the Bin Covering Model}
\label{sec:cover:cardinal}

In this section, we present the algorithm to compute $\frac{2}{3}$-CMMS allocations for the bin covering model.
That is, given each agent $a_i$'s MMS value $\kappa_i$ -- the maximum number of bins each bundle can cover in the MMS partition -- we compute an allocation in which the items received by each agent $a_i$ can be partitioned into $\kappa_i$ bundles, each with a total size of at least $\frac{2}{3}$.

\begin{theorem}
\label{thm:cover:cardinal}
For any bin covering instance, a $\frac{2}{3}$-CMMS allocation exists. 
\end{theorem}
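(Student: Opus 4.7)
By the IDO reduction, I may restrict to IDO instances and assume $s_{i,1} \ge s_{i,2} \ge \cdots \ge s_{i,m}$ for every agent $a_i$. I plan to construct the $\frac{2}{3}$-CMMS allocation via a recursive ``pick-and-match'' algorithm that maintains the invariant: at every recursive call with remaining agent set $N' \subseteq N$ and remaining item set $M' \subseteq M$, every $a_i \in N'$ still admits a partition of $M'$ into $|N'|$ bundles in which each bundle covers at least $\kappa_i$ bins of $s_i$-size $\ge 1$ (so the original MMS value $\kappa_i$ is preserved throughout).

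At each iteration I would proceed in four steps. (1) Choose a designated active agent $a_i \in N'$ by a rule tailored to the combinatorial structure---for instance, the agent whose MMS partition on $M'$ has the smallest slack in its lightest bundle---rather than an arbitrary choice as in the generic template of \citet{DBLP:journals/isci/Aigner-HorevS22}. (2) Have $a_i$ partition $M'$ into $|N'|$ bundles so that (a) each bundle can be refined into $\kappa_i$ sub-bundles of $s_i$-size $\ge \frac{2}{3}$, and (b) the partition satisfies an additional structural condition (e.g., a controlled spread of small items across bundles) needed downstream. (3) Compute an envy-free-type matching of agents to bundles in which every matched agent $a_j$ can herself refine her bundle into $\kappa_j$ sub-bundles of $s_j$-size $\ge \frac{2}{3}$, while no unmatched agent can do so on any matched bundle. (4) Remove the matched bundles and agents and recurse.

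To establish step~(2), I would start from $a_i$'s MMS partition of $M'$ into $|N'|$ parts (available by the invariant), each of which already covers $\kappa_i$ sub-bins of size $\ge 1$; a shaving/redistribution of items above the $\frac{2}{3}$ threshold then frees enough slack to enforce the structural condition~(b) while keeping each refined sub-bin at $\ge \frac{2}{3}$. For step~(3), a Hall-type counting argument, using the slack created by the $\frac{2}{3}$ threshold together with IDO, should yield the desired matching; and since $a_i$ is satisfied with every bundle by construction, the matching is nonempty, ensuring progress.

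The principal obstacle is step~(4): re-establishing the invariant for the unmatched agents after the removal. If some unmatched $a_j$'s MMS on $M' \setminus (\text{matched bundles})$ over the reduced number of parts drops below $\kappa_j$, the recursion collapses. The key leverage is the ``not satisfied'' conclusion of step~(3): because $a_j$'s own $\frac{2}{3}$-CMMS criterion fails on every matched bundle, each such bundle is combinatorially ``small'' for $a_j$, so one can delete those bundles from $a_j$'s original MMS partition and globally rearrange the remainder---using the IDO ordering and the structural condition of step~(2)---into an MMS partition of the residual items into $|N'|-|\text{matched}|$ valid bundles. Making this accounting tight enough to support the constant $\frac{2}{3}$, rather than the weaker ratio yielded by the generic scheme, is exactly where the two tailored adaptations (the targeted agent-selection rule and the bundle-structure constraints) are needed, and this is the delicate combinatorial step I expect to dominate the proof.
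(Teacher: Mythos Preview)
Your high-level framework---iterated envy-free matching in the style of \citet{DBLP:journals/isci/Aigner-HorevS22}, with a tailored agent-selection rule and structural constraints on the divider's partition---is indeed the paper's skeleton. But the concrete instantiations you sketch differ from the paper's, and as stated they leave a real gap. The invariant you propose to carry through the recursion (every remaining $a_j$ still has MMS value $\ge\kappa_j$ on $M'$ over $|N'|$ parts, i.e.\ bundles covering $\kappa_j$ bins of size $\ge 1$) is \emph{not} what the paper maintains, and there is no reason to believe it is maintainable: the only information step~(3) yields about a removed bundle $B$ is that $a_j$ cannot refine $B$ into $\kappa_j$ sub-bundles of $s_j$-size $\ge\tfrac{2}{3}$, which says nothing about whether the residual items still admit a full MMS-$\kappa_j$ partition into $|N'|-1$ parts. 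The paper never re-establishes such an invariant; instead it proves directly (its Lemma~\ref{lem::cardinal-goods-l-2}) that the selected agent can build the required $|N'|$ parts from scratch, by a size-counting argument that needs only the $\tfrac{2}{3}$ threshold.

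What makes that direct argument work are two very specific choices, neither of which matches your suggestions. (i) The selected agent is the remaining agent with the \emph{maximum number of large-or-medium items} (thresholds $\tfrac{1}{3},\tfrac{2}{3}$), not one with ``smallest slack''; (ii) each of the $|N'|$ parts is forced to contain exactly the large/medium items from one column of the round-robin $|N'|$-arrangement, rather than being obtained by shaving an MMS partition. Under IDO, rule~(i) guarantees that every previously matched bundle $B_j$ already contains all of the current divider $a_i$'s large/medium items from column $j$, so $a_i$'s remaining large/medium items lie cleanly in the unmatched columns; together with the envy-free bound $s_i(B_j)<\kappa_i$ on what was removed, this yields enough residual small items to top up every sub-bundle to $\tfrac{2}{3}$. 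The paper closes the section with an explicit 20-agent instance on which generic Lone Divider (arbitrary agent, arbitrary satisfied partition) fails to output a $\tfrac{2}{3}$-CMMS allocation; your step-(4) sketch, which invokes only the ``not satisfied'' conclusion plus IDO without the column constraint and the max-large/medium selection rule, would not survive that example.
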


To simplify the analysis, we assume in the MMS partition of each agent $a_i$, every bin is exactly covered. 
In other words, $M$ can be partitioned into $n\kappa_i$ bundles $X_1,\ldots,X_{\kappa_i\cdot n}$ such that $s_i(X_j)=1$ for all $j \in [\kappa_i\cdot n]$.
This assumption is made without loss of generality: we can always rescale the item sizes so that for any $e_l\in M$, if $e_l\in X_j$, $s'_{i,l}=\frac{s_{i,l}}{s_i(X_j)}$.
Then, if we can achieve $\frac{2}{3}$-CMMS in the scaled instance,  the same guarantee holds for the original instance.

\subsection{The Algorithm}

For each agent $a_i$, item $e$ is large if $\frac{2}{3}\leq s_i(e) \leq 1$, and is medium if $\frac{1}{3}< s_i(e) <\frac{2}{3}$. 
If an item is neither large nor medium, it is small.
As the capacity of the bin is 1, for each bundle in the MMS partition for $a_i$, aside from small items, it may contain at most (1) a large item, (2) a medium item, and (3) two medium items.
We now adjust the MMS partition as follows: if there exists medium items $e,e'$ such that $s_i(e)>s_i(e')$, and in the partition, $e$ is packed in a bin with another medium item while $e'$ is the only medium item in the bin, then swap the positions of $e$ and $e'$.
Repeat the process until no two medium items satisfying the swap condition remains, resulting in the \emph{refined-partition}.

For each bundle in the refined-partition, again, aside from small items, it may contain at most (1) a large item, (2) a medium item, and (3) two medium items.
Let $F$ denote the set of large and medium items that, in the refined-partition, are placed solely in a bundle without any other medium items. 
Let $H$ denote the set of medium items that, in the refined-partition, are placed in a bundle with another medium item. Thus, $F\cup H$ equals the set of large and medium items. 
%Define $f:=|F|$ and $h:=|H|$. 
For each $a_i$, as there are $\kappa_i\cdot n$ bundles in the refined-partition and items in $H$ are placed in pairs, it holds that
$
|F|+\frac{|H|}{2} \leq \kappa_i \cdot n.
$
The sets $F$ and $H$ can differ among agents. We abuse notation and use $F$ and $H$ for every agent, and will clarify the underlying agent from the context when needed.

Next, we introduce the high-level idea of the algorithm. At each step, we select an agent $a_i$ from the remaining agents, denoted as $\mathcal{N}$. Agent $a_i$ partitions the remaining items into $|\mathcal{N}|$ parts such that she is happy with every part.
In our context, this translates to that every part $S$ can be partitioned into $\kappa_i$ bundles, each with a total item size of at least $\frac{2}{3}$ (when targeting $\frac{2}{3}$-CMMS).
Then create a bipartite graph with remaining agents on one side and the constructed $|\mathcal{N}|$ parts on the other side. There is an edge between an agent and a part if the agent is happy when receiving that part.
\emph{An envy-free matching with regards to the agents} is a matching in which every unmatched agent is not incident to any matched part. \citet{DBLP:journals/isci/Aigner-HorevS22} proved that an envy-free matching with regards the agents always exists once the selected group can create $|\mathcal{N}|$ satisfied parts. Moreover, they provided a polynomial time algorithm to compute the maximin cardinality envy-free matching. 
Match agents to parts based on the envy-free matching, which guarantees that for every pair of unmatched $a_j$ and a matched part $S$, agent $a_j$ is not happy with receiving $S$.

Similar algorithmic ideas have been applied in computing approximate MMS allocations in other contexts \citep{DBLP:journals/corr/abs-2404-11582,DBLP:conf/aaai/AkramiR25a}. 
Our techniques differ from those existing algorithms in the following two aspects: at each step, (1) instead of choosing an arbitrary agent, we select the agent with the maximum total number of large and medium items, 
and (2) instead of allowing the selected agent to create $|\mathcal{N}|$ satisfied parts in an arbitrary way, we require that these $|\mathcal{N}|$ parts must partition large and medium items in a balanced manner.
In specific, the selected agent creates the parts based on a (predetermined) $|\mathcal{N}|$-\emph{arrangement}, defined as follows: according to the non-descending order of item size, place remaining items into $|\mathcal{N}|$ columns such that $j$-th column contains $j$-th, $(2|\mathcal{N}|-j+1)$-th, $(2|\mathcal{N}|+j)$-th,... largest size item.
When the selected agent creates $|\mathcal{N}|$ parts, each bundle must contain every large and medium item placed in some column in the $|\mathcal{N}|$-arrangement.
The proposed algorithm is formally introduced as Algorithm \ref{alg:2/3-cardinal-goods}.

\begin{algorithm}[ht!]
        \caption{ Computing cardinal approximate MMS allocations}
	\label{alg:2/3-cardinal-goods}
	\begin{algorithmic}[1]
        \REQUIRE An instance $(N,M,\bs)$ and $\kappa_i$ for every $a_i \in N$.
		\ENSURE A $\frac{2}{3}$-CMMS allocation $\mathbf{A} = (A_1,\ldots,A_n)$.
        %with $v_i(X_{i,j}) \geq \alpha_i$ for all $i$.
        \STATE Create the $|N|$-arrangement and initialize $\cN \gets N$.
        \WHILE{$\cN\neq \emptyset$}
        \STATE Select $a_i\in \cN$ with the maximum total number of large and medium items, breaking ties arbitrarily.\label{step:alg-2/3-select-agents}
        \STATE Create $|\cN|$ pairwise disjoint parts $S_1,\ldots,S_{|\cN|}$ as follows. For each $S_j$, $\forall j \leq |\cN|$, first distribute all large and medium items in the $j$-th column of the $|\cN|$-arrangement into at most $\kappa_i$ bundles such that the total size of items in each bundle is at most 1 for $a_i$.
        Then add small items to these $\kappa_i$ bundles until the total size of items in each bundle is at least $\frac{2}{3}$ for $a_i$.
        %their value for $a_i$ is at least $\frac{2}{3}$.
        \label{step:alg-2/3-properties} 
        \STATE Create a bipartite graph with vertices $\cN\cup (S_1,\ldots, S_{|\cN|})$. Add edge $(a_\ell,S_j)$ if and only if $S_j$ can be partitioned into $\kappa_\ell$ bundles, each with total item size at least $\frac{2}{3}$ for $a_\ell$.
        Compute a maximum-cardinality envy-free matching and assign each matched part to its matched agent.\label{step:alg-2/3-matching}
        % $S_j \textnormal{ can be partitioned into } \kappa_\ell$
        % $\left(\cN\cup (S_1,\ldots, S_{|\cN|}), \{\ell, S_j\} \in \cN\times (S_1,\ldots, S_{|\cN|})  :   \textnormal{ bundles} \right)$. 
        \STATE Update $\cN$ and $M$ by removing matched agents and items. 
        Then update the $|\cN|$-arrangement.\label{step:alg-2/3-update}
        \ENDWHILE
	\end{algorithmic}
\end{algorithm}

\subsection{The Proof of Theorem \ref{thm:cover:cardinal}}
We now prove that Algorithm \ref{alg:2/3-cardinal-goods} can return an allocation. It suffices to show that in each round, the selected agent can create the $|\mathcal{N}|$ parts described in Line \ref{step:alg-2/3-properties}. First, consider the first round of the algorithm.

\begin{lemma}\label{lem::cardinal-goods-l-1}
    At the first round of the while-loop, for each agent $a_i$, she can create $n$ pairwise disjoint parts $P_1,\ldots,P_n$ following Line \ref{step:alg-2/3-properties} such that for any $j\in [n]$, $P_j$ can be partitioned into $\kappa_i$ bundles, each with a total item size of at least $\frac{2}{3}$.
\end{lemma}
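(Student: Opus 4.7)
The plan is to prove the lemma by constructing the required parts $P_1, \ldots, P_n$ for a fixed agent $a_i$, via an MMS partition of $a_i$ aligned with the snake $n$-arrangement. Using the rescaling convention ($s_i(X_t) = 1$ for every MMS bundle), it suffices to exhibit an MMS partition in which, for every column $j$ and every row-pair $k \leq \kappa_i$ of the snake, the two items at positions $(2k-2)n + j$ and $2kn - j + 1$ lie in the same bundle, possibly padded by small items from positions past $2n\kappa_i$ to reach sum $1$. Given such an alignment, each bundle contributes its two snake-paired items --- summing to at most $1$ after rescaling, since the whole bundle sums to $1$ --- so Line \ref{step:alg-2/3-properties}'s requirement of packing column $j$'s large/medium items into at most $\kappa_i$ bundles of $a_i$-size $\leq 1$ is immediate. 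The $\geq \frac{2}{3}$ padding in Line \ref{step:alg-2/3-properties} is then free because every bundle of the MMS partition already has $a_i$-size exactly $1$.

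Before the alignment step, I invoke the refinement swap to localize $a_i$'s large and medium items: the $|F|$ singleton large/medium items occupy positions $1, \ldots, |F|$ in $a_i$'s sorted order, the $|H|$ paired mediums occupy positions $|F|+1, \ldots, |F|+|H|$, and the bound $|F| + |H|/2 \leq n\kappa_i$ implies that every column of the snake $n$-arrangement receives at most $2\kappa_i$ large/medium items in total. This localization also controls the potential per-pair deficit that the alignment step must close.

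The main obstacle is proving the existence of the column-aligned MMS partition. I would argue by a constructive swap/augmentation procedure: starting from any MMS partition of $a_i$, apply local exchanges that bring snake-paired items into common bundles while preserving each bundle's $\geq 1$ sum, supplemented by redistributing the free small items from positions $> 2n\kappa_i$ to close any residual deficit. The feasibility rests on a global mass inequality (total item $a_i$-mass $\geq n\kappa_i$) together with the granularity bound $\leq \frac{1}{3}$ on small items, which together guarantee that a Hall-style (or greedy) distribution of small items to deficient snake-pair bundles succeeds. The naive alternative --- pairing within a column by largest-with-smallest directly from the (unrescaled) sizes --- can fail (e.g., when $s_{i,(2k-2)n+j} + s_{i,2kn-j+1}$ already exceeds $1$ before rescaling), so the detour through an aligned MMS partition followed by rescaling is essential. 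Once this structural partition is in hand, the lemma follows immediately from the alignment-plus-rescaling observation.
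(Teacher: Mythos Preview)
Your plan has a genuine gap: the snake-aligned MMS partition you posit need not exist. Your pairing places the items at positions $(2k-2)n+j$ and $2kn-j+1$ into a common bundle, but these two items can already sum to more than $1$ under the rescaling convention, which makes the bundle incompatible with any MMS partition and leaves nothing for swaps or small-item padding to repair. Concretely, take $n=2$, $\kappa_i=2$, and rescaled sizes $(0.9,0.8,0.7,0.5,0.5,0.3,0.2,0.1)$; the MMS partition $\{1,8\},\{2,7\},\{3,6\},\{4,5\}$ already has every bundle summing to exactly $1$. Your snake pairs are $(1,4),(5,8),(2,3),(6,7)$ with sums $1.4,\,0.6,\,1.5,\,0.5$. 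The pair $(2,3)$ alone exceeds $1$, and there are no items beyond position $2n\kappa_i=8$ to redistribute, so no augmentation procedure can turn this into an MMS partition. Your global mass argument cannot help here: the total mass is exactly $n\kappa_i$, and the surplus in the over-full pairs is locked in large items, not in small items that could migrate.

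The paper's proof is structured very differently and avoids precisely this obstruction. It never pairs items from $F$ (the large items and the solo mediums after refinement) with anything; each $F_j$ item occupies its own bin. Only the items in $H$ (mediums that were paired in the refined MMS partition) are ever placed two to a bin, and the key Observation --- that $s_i(h_r)+s_i(h_{r'})\le 1$ whenever $r+r'\ge |H|+1$ --- certifies that a largest-with-smallest pairing \emph{restricted to $H_j$} stays within capacity. Establishing that this yields at most $\kappa_i-|F_j|$ bins for $H_j$ then requires a case analysis on $|F|\bmod 2n$ and on the column index $j$, which is where the real work lies. Your proposal bypasses the $F$/$H$ distinction entirely, and that is exactly what breaks it.
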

\begin{proof}
    For any $j\in [n]$, let $F_j\subseteq F$ (resp., $H_j\subseteq H$) denote the items in $F$ (resp., $H$) in the $j$-th column of the $n$-arrangement. We focus on the $j$-th column and first show that $a_i$ can place all items in $F_j\cup H_j$ into at most $\kappa_i$ bundles (or bins), each having at most two items with total size at most 1 for $a_i$.
    In specific, we always place each item in $F_j$ into a single bin, and thus, it suffices to show that items in $H_j$ can be placed into at most $\kappa_i - |F_j|$ bundles, each having at most two items with total size at most 1.

    Suppose $ \lfloor \frac{|F|}{2n} \rfloor = x$ with $x\in \mathbb{N}^+\cup\{0\}$ and $H=\{h_1,\ldots,h_{|H|}\}$ with $s_i(h_1)\geq s_i(h_2) \geq \ldots \geq s_i(h_{|H|})$. For simplicity, instead of $s_i(\{e\})$, we write $s_i(e)$.
    In the MMS refined-partition of $a_i$, items in $H$ form $\frac{|H|}{2}$ pairs, satisfying that the two items in each pair have total size at most 1. Then it must hold that $s_i(h_1\cup h_{|H|}) \leq 1$, as otherwise, the total size of $h_1$ and any item in $H$ exceeds 1, contradicting the refined-partition.
    By the same reasoning, $s_i(h_2\cup h_{|H|-1}) \leq 1$ holds, and indeed, for any $r,r'$ with $r+r'\geq |H|+1$, $s_i(h_r\cup h_{r'}) \leq 1$.
    \begin{observation}\label{obs::cardinal-goods-H}
        For any $r,r'$ with $r+r'\geq |H|+1$, $s_i(h_r\cup h_{r'}) \leq 1$.
    \end{observation}
    Next, we below split the proof based on the values of $|F|$ and $j$.
    \smallskip
    
    \noindent\textbf{\underline{Case 1:}} $|F|=2nx+y$ with $y\in \{1,\ldots,n\}$, and $j\in \{1,\ldots,y\}$.
    In this case, we have $|F_j|=2x+1$. Then the remaining is to show that one can always place items in $H_j$ into at most $\kappa_i-2x-1$ bundles, each having at most two items with total size at most 1.
    Note that $|F|=2nx+y\leq \kappa_in$ holds, and hence, $\kappa_i \geq 2x+1$, as $\kappa_i,x,y$ are integers and $y\geq 1$. 
    That $|H_j|=0$ is a trivial case, and then we focus on $|H_j|\geq 1$. By $|F|=2nx+y$ and $|F|+\frac{|H|}{2} \leq \kappa_in$, it holds that $|H|\leq 2\kappa_in-4nx-2y$, which implies $|H_j|\leq 2\kappa_i-4x-2$.
    
    First, consider the case when $|H_j| = 2\kappa_i-4x-2$. Since $|H_j|\geq 1$ and $\kappa_i,x$ are integers, we have $\kappa_i \geq 2x+2$. 
    We enumerate items in $H_j$ as follows:
    $$
    H_j=\left\{ h_r\mid r= 2np-y-j+1 \textnormal{ and } r = 2np-y+j \textnormal{ for } p=1,\ldots, \kappa_i-2x-1\right\}.
    $$
    Due to Observation \ref{obs::cardinal-goods-H} and $|H|\leq 2\kappa_in-4nx-2y$, for each $s\in\{1,\ldots,\frac{|H_j|}{2}\}$, the total size of the $s$-th largest size item and the $(|H_j|+1-s)$-th largest size item in $H_j$ is at most 1. 
    Therefore, one can place items in $H_j$ into $\frac{|H_j|}{2} = \kappa_i-2x-1$ bundles, each having two items with total size at most 1.

    When $|H_j|=2\kappa_i-4x-3$, it must hold that $|H|\leq 2n(\kappa_i-2x-1)+j-y-1$. In the above enumeration, item $h_r$ with $r=2n(\kappa_i-2x-1)-y+j$ no longer exists. If $|H_j|\leq 2$, we have $\kappa_i\geq 2x+1 + |H_j|$, as $\kappa_i$ is an integer.
    Accordingly, $|H_j| \leq \kappa_i-|F_j|$, and we are done. Focus on $|H_j|\geq 3$.
    By Observation \ref{obs::cardinal-goods-H}, for each $s\in\{1,\ldots,\lfloor\frac{|H_j|}{2}\rfloor\}$, the total size of the $(s+1)$-th largest size item and the $(|H_j|+1-s)$-th largest size item in $H_j$ is at most 1. 
    Then, excluding the largest value item in $H_j$, the remaining items can be placed into $\kappa_i-2x-2$ bundles, each having two items with total size at most 1. Therefore, $a_i$ can place all items in $H_j$ into $\kappa_i-2x-1$ bundles, each having at most two items with total size at most 1.

    When $|H_j|\leq 2\kappa_i-4x-4$, it must hold that $|H|\leq 2n(\kappa_i-2x-2)+2n-j-y$. As $\kappa_i \geq 2x+2+\frac{|H_j|}{2}$ and $\kappa_i$ is an integer, we have $\kappa_i \geq 2x+1 + |H_j| = |F_j|+|H_j|$ when $|H_j|\leq 3$.
    We can focus on $|H_j|\geq 4$. By Observation \ref{obs::cardinal-goods-H}, for each $s\in\{1,\ldots,\lfloor\frac{|H_j|}{2}\rfloor-1\}$, the total size of the $(s+2)$-th largest value item and the $(|H_j|+1-s)$-th largest size item in $H_j$ is at most 1. 
    Thus, aside from the two largest size items in $H_j$, the remaining items can be placed into $\lceil \frac{|H_j|-2}{2} \rceil \leq \kappa_i-2x-3$ bundles, each having at most two items with total size at most one. Therefore, agent $a_i$ can place all items in $H_j$ into at most $\kappa_i-2x-1$ bundles, each having at most two items with total size at most 1.
    \smallskip
    
    \noindent\textbf{\underline{Case 2:}} $|F|=2nx+y$ with $y\in \{1,\ldots,n-1\}$, and $j \in \{y+1,\ldots,n\} $. The case $y=j=n$ is covered in Case 1. 
    In Case 2, we have $|F_j|=2x$.
    Then the remaining is to show that $a_i$ can always place items in $H_j$ into at most $\kappa_i-2x$ bundles, each having at most two items with total size at most 1. 
    Note that $|F|=2nx+y\leq \kappa_in$ holds, and hence, $\kappa_i \geq 2x+1$, as $\kappa_i,x,y$ are integers and $y\geq 1$.
    That $|H_j|=0$ is a trivial case, and then we focus on $|H_j|\geq 1$. By $|F|=2nx+y$ and $|F|+\frac{|H|}{2} \leq \kappa_in$, it holds that $|H|\leq 2\kappa_in-4nx-2y$, which implies $|H_j|\leq 2\kappa_i-4x$ as $j\geq y+1$.
    
    First, consider the case when $|H_j| = 2\kappa_i-4x$. 
    %Since $|H_j|\geq 1$ and $\kappa_i,x$ are integers, we have $\kappa_i \geq 2x+2$. 
    Enumerate items in $H_j$ as follows:
    $$
    H_j = \left\{ h_r\mid r=2n(p-1)+j-y \textnormal{ and } r=2np+1-j-y \textnormal{ for }p=1,\ldots,\kappa_i-2x \right\}.
    $$
    Due to Observation \ref{obs::cardinal-goods-H} and $|H|\leq 2\kappa_in-4nx-2y$, for each $s\in\{1,\ldots,\frac{|H_j|}{2}\}$, the total size of the $s$-th largest size item and the $(|H_j|+1-s)$-th largest size item in $H_j$ is at most 1. 
    %Note that, when $|H|=2\kappa_n-4nx-2y$, we have $|H_j|=2\kappa_i-4x-2$, which is an even number.
    Therefore, one can place items in $H_j$ into $\frac{|H_j|}{2} = \kappa_i-2x$ bundles, each having two items with total size at most 1.

    When $|H_j|=2\kappa_i-4x-1$, it must hold that $|H|\leq 2n(\kappa_i-2x)-j-y$. In the above enumeration, item $h_r$ with $r=2n(\kappa_i-2x)+1-j-y$ no longer exists. 
    If $|H_j|\leq 2$, we have $\kappa_i\geq 2x+ |H_j|$, as $\kappa_i$ is an integer. Accordingly, $|H_j|\leq \kappa_i - |F_j|$, and we are done.
    Focus on $|H_j|\geq 3$.
    By Observation \ref{obs::cardinal-goods-H}, for each $s\in\{1,\ldots,\lfloor\frac{|H_j|}{2}\rfloor\}$, the total size of the $(s+1)$-th largest size item and the $(|H_j|+1-s)$-th largest size item in $H_j$ is at most 1. 
    Then, aside from the largest size item in $H_j$, the remaining items can be placed into $\kappa_i-2x-1$ bundles, each having two items with total size at most 1. Therefore, agent $a_i$ can place all items in $H_j$ into $\kappa_i-2x$ bundles, each having at most two items with total size at most 1.

    When $|H_j|\leq 2\kappa_i-4x-2$, it must hold that $|H|\leq 2n(\kappa_i-2x-1)+j-y-1$. If $|H_j|\leq 3$, we have $\kappa_i\geq 2x+|H_j| = |F_j|+|H_j|$, as $\kappa_i$ is an integer. Thus, we can focus on $|H_j|\geq 4$.
    Due to Observation \ref{obs::cardinal-goods-H}, for each $s\in\{1,\ldots,\lfloor\frac{|H_j|}{2}\rfloor-1\}$, the total size of the $(s+2)$-th largest size item and the $(|H_j|+1-s)$-th largest size item in $H_j$ is at most 1. 
    Thus, aside from the two largest size items in $H_j$, the remaining items can be placed into $\lceil \frac{|H_j|-2}{2} \rceil \leq \kappa_i-2x-2$ bundles, each having at most two items with total size at most 1. Therefore, agent $a_i$ can always place all items in $H_j$ into at most $\kappa_i-2x$ bundles, each having at most two items with total size at most 1.
    \smallskip
    
    \noindent\textbf{\underline{Case 3:}} $|F|=2nx+n+y$ with $y\in\{1,\ldots,n\}$, and $j=\{n-y+1,\ldots,n\}$. In this case, we have $|F_j|=2x+2$.
    Then the remaining is to show that one can always place items in $H_j$ into at most $\kappa_i-2x-2$ bundles, each having at most two items with total size at most 1. 
    By $\kappa_in \geq 2nx+n+y$ and $y\geq 1$, we have $\kappa_i \geq 2x+2$ as $\kappa_i$ is an integer.
    That $|H_j|=0$ is a trivial case, and we focus on $|H_j|\geq 1$.
    By $|F|+\frac{|H|}{2} \leq \kappa_in$, it holds that $|H|\leq 2\kappa_in-4nx-2n-2y$, which implies $|H_j|\leq 2\kappa_i-4x-4$.

    First, consider the case when $|H_j|= 2\kappa_i-4x-4$. As $|H_j|\geq 1$, we have $2\kappa_i\geq 4x+5$, implying $\kappa_i\geq 2x+3$ as $\kappa_i$ is an integer.
    Enumerate items in $H_j$ as follows:
    $$
    H_j = \left\{ h_r\mid r=2n(p-1)+n+j-y \textnormal{ and } r=2np+n-j-y+1 \textnormal{ for } p = 1,\ldots, \kappa_i-2x-2 \right\}.
    $$
    Due to Observation \ref{obs::cardinal-goods-H} and $|H|\leq 2\kappa_in-4nx-2n-2y$, for each $s\in\{1,\ldots,\frac{|H_j|}{2}\}$, the total size of the $s$-th largest size item and the $(|H_j|+1-s)$-th largest size item in $H_j$ is at most 1.
    Therefore, one can place items in $H_j$ into $\frac{|H_j|}{2} = \kappa_i-2x-2$ bundles, each having two items with total size at most 1.

    When $|H_j|=2\kappa_i-4x-5$, it must hold that $|H|\leq 2n(\kappa_i-2x-2)+n-j-y$. In the above numeration, item $h_r$ with $r=2n(\kappa_i-2x-2)+n-j-y+1$ no longer exists.
    If $|H_j|\leq 2$, we have $\kappa_i\geq 2x+2+|H_j|=|F_j|+|H_j|$, as $\kappa_i$ is an integer. Thus, we focus on $|H_j|\geq 3$. 
    By Observation \ref{obs::cardinal-goods-H}, for each $s\in\{1,\ldots,\lfloor\frac{|H_j|}{2}\rfloor\}$, the total size of the $(s+1)$-th largest size item and the $(|H_j|+1-s)$-th largest size item in $H_j$ is at most 1. 
    Then, aside from the largest size item in $H_j$, the remaining items can be placed into $\kappa_i-2x-3$ bundles, each having two items with total size at most 1. Therefore, one can place all items in $H_j$ into $\kappa_i-2x-2$ bundles, each having at most two items with total size at most 1.

    When $|H_j|\leq 2\kappa_i-4x-6$, it must hold that $|H|\leq 2n(\kappa_i-2x-3)+n+j-y-1$. If $|H_j|\leq 3$, we have $\kappa_i\geq 2x+2+|H_j|=|F_j|+|H_j|$, as $\kappa_i$ is an integer. Thus, we focus on $|H_j|\geq 4$. 
    Due to Observation \ref{obs::cardinal-goods-H}, for each $s\in\{1,\ldots,\lfloor\frac{|H_j|}{2}\rfloor-1\}$, the total size of the $(s+2)$-th largest size item and the $(|H_j|+1-s)$-th largest size item in $H_j$ is at most 1. 
    Thus, aside from the two largest size items in $H_j$, the remaining items can be placed into $\lceil \frac{|H_j|-2}{2} \rceil \leq \kappa_i-2x-4$ bundles, each having at most two items with total size at most 1. Therefore, agent $a_i$ can always place all items in $H_j$ into at most $\kappa_i-2x-2$ bundles, each having at most two items with total size at most 1.
    \smallskip
    
    \noindent\textbf{\underline{Case 4:}} $|F|=2nx+n+y$ with $y\in\{1,\ldots,n-1\}$, and $j=\{1,\ldots, n-y\}$. In this case, we have $|F_j|=2x+1$. Then the remaining is to show that one can always place items in $H_j$ into at most $\kappa_i-2x-1$ bundles, each having at most two items with total size at most 1. 
    That $|H_j|=0$ is a trivial case, and we focus on $|H_j|\geq 1$.
    By $|F|+\frac{|H|}{2} \leq \kappa_in$, it holds that $|H|\leq 2\kappa_in-4nx-2n-2y$, which implies $|H_j|\leq 2\kappa_i-4x-2$.

    First, consider the case when $|H_j| = 2\kappa_i-4x-2$. Enumerate items in $H_j$ as follows:
    $$
    H_j=\left\{ h_r \mid r=2np+n-j-y+1 \textnormal{ and } r= 2np+n+j-y \textnormal{ for } p=0,\ldots,\kappa_i-2x-2 \right\}.
    $$
    Due to Observation \ref{obs::cardinal-goods-H} and $|H|\leq 2\kappa_in-4nx-2n-2y$, for each $s\in\{1,\ldots,\frac{|H_j|}{2}\}$, the total size of the $s$-th largest size item and the $(|H_j|+1-s)$-th largest size item in $H_j$ is at most 1.
    Therefore, one can place items in $H_j$ into $\frac{|H_j|}{2} = \kappa_i-2x-1$ bundles, each having two items with total size at most 1.

    When $|H_j|=2\kappa_i-4x-3$, it must hold that $|H|\leq 2n(\kappa_i-2x-2)+n+j-y-1$. If $|H_j|\leq 2$, we have $\kappa_i\geq 2x+1+|H_j|=|F_j|+|H_j|$, as $\kappa_i$ is an integer. Thus, we focus on $|H_j|\geq 3$. 
    By Observation \ref{obs::cardinal-goods-H}, for each $s\in\{1,\ldots,\lfloor\frac{|H_j|}{2}\rfloor\}$, the total size of the $(s+1)$-th largest size item and the $(|H_j|+1-s)$-th largest size item in $H_j$ is at most 1. 
    Then, aside from the largest value item in $H_j$, the remaining items can be placed into $\kappa_i-2x-2$ bundles, each having two items with total size at most 1. Therefore, one can place all items in $H_j$ into $\kappa_i-2x-1$ bundles, each having at most two items with total size at most 1.

    When $|H_j|\leq 2\kappa_i-4x-4$, it must hold that $|H|\leq 2n(\kappa_i-2x-2)+n-j-y$.
    If $|H_j|\leq 3$, we have $\kappa_i\geq 2x+1+|H_j|=|F_j|+|H_j|$, as $\kappa_i$ is an integer. Thus, we focus on $|H_j|\geq 4$. 
    Due to Observation \ref{obs::cardinal-goods-H}, for any $s\in\{1,\ldots,\lfloor\frac{|H_j|}{2}\rfloor-1\}$, the total size of the $(s+2)$-th largest size item and the $(|H_j|+1-s)$-th largest size item in $H_j$ is at most 1. 
    Thus, aside from the two largest size items in $H_j$, the remaining items can be placed into $\lceil \frac{|H_j|-2}{2} \rceil \leq \kappa_i-2x-3$ bundles, each having at most two items with total size at most 1. Therefore, agent $a_i$ can always place all items in $H_j$ into at most $\kappa_i-2x-1$ bundles, each having at most two items with total size at most 1. 
\end{proof}

Now suppose that agents $a_1,\ldots,a_{k}$ are matched and each matched agent $a_j$ receives subsets of items $B_j$. In the next round, agent $a_i$ is selected. We prove that agent $a_i$ can still create $n-k$ parts as required in Line \ref{step:alg-2/3-properties}.

\begin{lemma}\label{lem::cardinal-goods-l-2}
    For each $k\in [n-1]$ and any $k$ subsets $B_1,\ldots,B_k$ allocated before $a_i$ is selected, 
    agent $a_i$ can create $n-k$ pairwise disjoint parts $P_1,\ldots,P_{n-k}$ following Line \ref{step:alg-2/3-properties} such that for any $j\in [n-k]$, $P_j$ can be partitioned into $\kappa_i$ bundles, each with a total item size of at least $\frac{2}{3}$.
\end{lemma}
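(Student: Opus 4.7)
My plan is to mirror the argument of Lemma~\ref{lem::cardinal-goods-l-1} on the residual instance. Let $M' := M \setminus \bigcup_{\ell \le k} B_\ell$ denote the items remaining when $a_i$ is selected, and let $F'$ (resp.\ $H'$) denote the large/alone (resp.\ paired-medium) items of $a_i$ in her original refined-partition of $M$ that still lie in $M'$. The goal is to establish the analog of the counting inequality used in Lemma~\ref{lem::cardinal-goods-l-1},
\[
|F'| + \tfrac{|H'|}{2} \le \kappa_i (n - k),
\]
and then replay Cases~1--4 of Lemma~\ref{lem::cardinal-goods-l-1} on the current $(n-k)$-arrangement of $M'$, with $F,H,n$ everywhere replaced by $F', H', n - k$ respectively.

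The two design choices of Algorithm~\ref{alg:2/3-cardinal-goods} are what supply this inequality. The envy-free matching in Line~\ref{step:alg-2/3-matching} forces that whenever $a_i$ was left unmatched in an earlier round, she had no edge to any matched part; equivalently, for every $\ell \in [k]$ she cannot partition $B_\ell$ into $\kappa_i$ bundles each of $s_i$-total at least $\tfrac{2}{3}$. I plan to convert this into the statement that $B_\ell$ ``consumes'' at least $\kappa_i$ bundles of $a_i$'s refined-partition of $M$, so that summing over the $k$ matched parts accounts for at least $k\kappa_i$ of the original $\kappa_i n$ bundles and leaves at most $\kappa_i(n-k)$ bundles available to host the items of $F' \cup H'$. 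The max-large/medium-count selection rule of Line~\ref{step:alg-2/3-select-agents} enters indirectly: it guarantees that the agent $a_{j(\ell)}$ chosen in round $\ell$ had at least as many large/medium items as $a_i$ had at that moment, which, combined with the balanced column distribution of the arrangement, helps ensure that $B_\ell$ carries a large enough $a_i$-footprint to trigger the bundle-consumption bound.

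I expect two technical obstacles. The main one is the bundle-consumption claim itself: turning ``$a_i$ has no edge to $B_\ell$'' into ``$B_\ell$ overlaps at least $\kappa_i$ of $a_i$'s refined bundles'' is not immediate, because the unmatched-status hypothesis rules out one particular repackaging of $B_\ell$ rather than directly bounding $s_i(B_\ell)$. I plan to argue this contrapositively: if $B_\ell$ met fewer than $\kappa_i$ of $a_i$'s refined bundles, then the items of $B_\ell$, possibly together with some small fillers absorbed from neighboring bundles, could be grouped into $\kappa_i$ bundles each of $s_i$-total at least $\tfrac{2}{3}$, contradicting $a_i$'s unmatched status. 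The second, more bookkeeping-level obstacle is that Observation~\ref{obs::cardinal-goods-H} is phrased in terms of ranks in the original sequence $H$, whereas the case analysis of Lemma~\ref{lem::cardinal-goods-l-1} pairs items by their ranks in $H'$; I will need a short argument showing that enough of the surviving pairs in $H'$ still satisfy the $r + r' \ge |H| + 1$ condition in the original ranking, so that the pairing step in each of the four cases remains valid. Once both obstacles are resolved, the remainder of the proof is the mechanical transfer of the four-case analysis, producing the required $\kappa_i$-bundle decomposition of each of the $n-k$ parts $P_1, \ldots, P_{n-k}$.
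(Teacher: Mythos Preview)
Your plan diverges from the paper's argument in a way that leaves two genuine gaps.

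The bundle-consumption step does not go through. The no-edge condition says $a_i$ cannot split $B_\ell$ into $\kappa_i$ bundles each of $s_i$-size at least $\tfrac{2}{3}$. Your proposed contrapositive --- ``if $B_\ell$ meets fewer than $\kappa_i$ refined bundles, then $B_\ell$ together with fillers could be so split'' --- points the wrong way: meeting few bundles means $s_i(B_\ell)$ is \emph{small}, which makes a $\kappa_i$-way split harder, not easier; and padding with items outside $B_\ell$ cannot create an edge, since edges test a partition of $B_\ell$ alone. So this route to $|F'|+\tfrac{|H'|}{2}\le\kappa_i(n-k)$ fails. Even if you had the inequality, replaying Lemma~\ref{lem::cardinal-goods-l-1} on $H'$ runs into exactly the obstacle you note: Observation~\ref{obs::cardinal-goods-H} is about ranks in the original $H$, and deleting items can break it (take $H=\{h_1,\ldots,h_4\}$ paired as $(h_1,h_4),(h_2,h_3)$; after removing $h_3,h_4$ one may have $s_i(h_1)+s_i(h_2)>1$).

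The paper sidesteps both problems by using the selection rule far more directly than your ``enters indirectly'' suggests. By IDO and the max-count choice in Line~\ref{step:alg-2/3-select-agents}, every earlier divider's large/medium set contains $a_i$'s, so each matched $B_j$ carries precisely the $a_i$-large/medium items of one column of the \emph{original} $n$-arrangement; the $a_i$-large/medium items that remain are exactly those in the other $n-k$ columns. Lemma~\ref{lem::cardinal-goods-l-1} then applies to each such column as a black box --- no new counting inequality, no replay of the four cases, and no need to salvage Observation~\ref{obs::cardinal-goods-H} for $H'$. The envy-free matching is used for a different purpose: it yields $s_i(B_j)<\kappa_i$ (otherwise the Lemma~\ref{lem::cardinal-goods-l-1} packing of column $j$ plus small-item filling would give $a_i$ an edge to $B_j$), and summing these bounds shows $s_i\bigl(M\setminus\bigcup_j B_j\bigr)>\kappa_i(n-k)-\tfrac{1}{3}$, which is exactly enough small mass to top every bin up to $\tfrac{2}{3}$.
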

\begin{proof}
    Without loss of generality, assume that each matched $B_j$ contains the large and medium items in $j$-th column of the $n$-arrangement (created in the first round).
    At each round, the selected agent is the one with the maximum total number of large and medium items among all remaining agents.
    Then for each $j\in [k]$, $B_j$ includes $a_i$'s large and medium items in $j$-th column of the $n$-arrangement.
    Moreover for each $j\in [k]$, let $S_j\subseteq B_j$ be the small items for $a_j$, and then each item in $S_1\cup \cdots \cup S_k$ is also small for $a_i$.
    Thus for any $\ell\geq k+1$, the large and medium items (from $a_i$'s perspective) in $\ell$-th column of the $n$-arrangement are unallocated when $a_i$ becomes the selected agent.

    By the property of envy-free matching with regards the agents, for any $B_j, \forall j\in [k]$, it must be the case that $B_j$ cannot be partitioned into $\kappa_i$ bundles, each with a total item size of at least $\frac{2}{3}$.
    For each $j\in [k]$, by Lemma \ref{lem::cardinal-goods-l-1}, the large and medium items for $a_i$ in the $j$-th column of the $n$-arrangement, denoted as $F_j\cup H_j$, can be placed into at most $\kappa_in$ bundles, each with total item size at most 1.
    Then we claim that 
    $$
    v_i(F_j)+v_i(H_j) + v_i(S_j) < \kappa_i.
    $$
    To see the claim, assume for a contradiction that $v_i(F_j)+v_i(H_j) + v_i(S_j) \geq  \kappa_i$. We next create a $\kappa_i$-partition of $B_j$ such that each bundle has total item size at least $\frac{2}{3}$, contradicting the envy-free matching. 
    First, place $F_j\cup H_j$ into at most $\kappa_i$ bundles, each with a total item size of at most 1. Lemma \ref{lem::cardinal-goods-l-1} guarantees the existence of such a partition.
    Then, if a bundle has total item size less than $\frac{2}{3}$, we add small items such that the total size of the bundle is at least $\frac{2}{3}$.
    Since each small item has size at most $\frac{1}{3}$, after adding small items, the total size of items in the bundle does not exceed 1.
    Therefore, $v_i(F_j)+v_i(H_j) + v_i(S_j) \geq  \kappa_i$ implies that there are enough small items to make each bundle having total item size at least $\frac{2}{3}$, leading to the desired contradiction.

    For $a_i$ and each $\ell \geq k+1$, place the large and medium items in $\ell$-th column of the $n$-arrangement into at most $\kappa_i$ bundles, each with total item size at most 1 (by Lemma \ref{lem::cardinal-goods-l-1}).
    If the number of these bundles is less than $\kappa_i (n - k)$, create empty bundles until the total number reaches $\kappa_i (n - k)$.
    Next, if a bundle has total item size less than $\frac{2}{3}$, we add small items such that the total size of the bundle is at least $\frac{2}{3}$.

    We below show that unallocated small items are enough for producing $\kappa_i(n-k)$ bundles, each with total item size at least $\frac{2}{3}$. 
    For the sake of contradiction, assume that after adding all small items, there exists a bundle with total size less than $\frac{2}{3}$.
    Since each small item has size at most $\frac{1}{3}$ and $a_i$ stops adding small items to a bundle when the total item size becomes at least $\frac{2}{3}$, then for each created bundle created, the total item size is at most 1.
    Therefore, the total size of items in $M\setminus\left(B_1\cup \cdots \cup B_k\right)$ is less than $\kappa_i(n-k)-\frac{1}{3}$.
    Combining the upper bound of total size of items in $B_j$'s, we have that the total size of items for $a_i$ is less than $\kappa_in-\frac{1}{3}$, a contradiction. 
\end{proof}

By Lemmas \ref{lem::cardinal-goods-l-1} and \ref{lem::cardinal-goods-l-2}, at each round of the algorithm, the selected agent can create disjoint parts in Line \ref{step:alg-2/3-properties}. Thus, at each round, there exists at least one matched agent, and therefore, Algorithm \ref{alg:2/3-cardinal-goods} returns $\frac{2}{3}$-CMMS allocations. 
We remark that given the oracle of valuations and MMS partitions, the algorithm runs in time polynomial in $n$ and $m$ by making $n$ partitions queries and $n^3$ valuation queries (for building bipartite graphs).
In fact, we can save the $n^3$ valuation queries by modifying the construction of bipartite graphs.
Whether there exists an edge between $S_j$ and $a_\ell$ can be determined as follows: Let $a_\ell$ fill $\kappa_\ell$ bins with $S_j$ in the same way as that in the proof of Lemma 2, and if each bin has size at least 2/3, then create an edge between $a_\ell$ and $S_j$. 
The above analysis still holds.

To complement the result, we present an example where the existing algorithm, namely Lone Divider, in \citep{DBLP:journals/isci/Aigner-HorevS22} fails to output a $\frac{2}{3}$-CMMS allocation.
The differences between Lone Divider and Algorithm \ref{alg:2/3-cardinal-goods} are: (1) in Line \ref{step:alg-2/3-select-agents}, Lone Divider selects an arbitrary agent, and (2) in Line \ref{step:alg-2/3-properties}, Lone Divider does not impose requirements on the $|\cN|$ parts created by the selected agent.

Consider an instance with 20 agents. Focusing on agent $a_n$ and assuming $\kappa_n=3$, agent $a_n$'s MMS partition has 60 bundles, each composed of three items with size $\frac{2}{3}-\epsilon$, $\frac{1}{3}-\epsilon$, and $2\epsilon$, where $\epsilon>0$ is arbitrarily small.
It is possible that $a_n$ is the last agent and the first 19 allocated bundles are:
12 bundles each with 5 items of size $\frac{2}{3}-\epsilon$, and 7 bundles each with 8 items of size $\frac{1}{3}-\epsilon$.
From $a_n$'s perspective, none of these 19 bundles can satisfy her, as they cannot form 3 bundles each with a total size of at least $\frac{2}{3}$. Thus, this situation can happen in Lone Divider.
There remains 4 items each with size $\frac{1}{3}-\epsilon$ and 60 items each with size $2\epsilon$. It is easy to see that with these remaining items, agent $a_n$ cannot form 3 bundles each with a total size of at least $\frac{2}{3}$.

\section{Ordinal Approximation for the Bin Covering Model}
\label{sec::goods}

In this section, we elaborate on the algorithm that computes $\frac{3}{4}$-OMMS allocations for the bin covering model.  
That is, for any bin covering instance, we can compute an allocation such that each agent $a_i$ can cover at least $\frac{3}{4}\kappa_i-\frac{7}{4}$ bins using her allocated items.

\begin{theorem}
\label{thm:cover:ordinal}
    For any bin covering instance, a $\frac{3}{4}$-OMMS allocation exists. 
\end{theorem}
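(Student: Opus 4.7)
The plan is to reduce to IDO instances via the stated lemma and then adapt the Lone-Divider-style framework of Algorithm \ref{alg:2/3-cardinal-goods}, using the same item classification -- large (size $> \frac{2}{3}$), medium (size in $(\frac{1}{3}, \frac{2}{3}]$), and small (size $\leq \frac{1}{3}$) -- but now targeting the actual number of covered bins rather than the per-bundle size threshold $\frac{2}{3}$. This classification aligns naturally with the $\frac{3}{4}$ target: a greedy bin-filling procedure using only small items produces bins of size in $[1, \frac{4}{3})$, so small items of total size $S$ cover at least $\frac{3S}{4} - \frac{3}{4}$ bins -- exactly the desired rate with a small additive slack.

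Working with the refined MMS partition from Section \ref{sec:cover:cardinal}, each agent $a_i$'s view of $M$ consists of $n\kappa_i$ sub-bins of size exactly $1$, where each sub-bin contains at most one large item or at most two medium items aside from small filler. The algorithm would iteratively select a remaining agent, have her partition the unallocated items into $|\mathcal{N}|$ parts via a balanced columnar arrangement of the refined MMS sub-bins (analogous to the $|\mathcal{N}|$-arrangement in the cardinal algorithm), and match parts to agents via an envy-free matching. A part is deemed satisfactory for an agent $a_\ell$ if its items can be rearranged to cover at least $\frac{3}{4}\kappa_\ell - \frac{7}{4}$ bins; inside a part, large items cover their own bins, medium items are paired with top-up from small items to reach size $1$, and leftover small items cover additional bins greedily. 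Analogues of Lemmas \ref{lem::cardinal-goods-l-1} and \ref{lem::cardinal-goods-l-2} would establish that such parts always exist -- in the first round by the direct columnar partition, and in later rounds by combining the envy-free matching property (matched parts failed the $\frac{3}{4}\kappa_\ell - \frac{7}{4}$ test for unmatched agents) with a total-size accounting argument on the items that remain.

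The main obstacle is tightly tracking the $\frac{7}{4}$ additive loss, which must absorb several sources of imbalance: boundary effects when $|F|$ is not divisible by $n$ (up to $1$ bin), parity issues when distributing medium-item pairs across columns (up to $1$ bin), and the fractional rounding loss from greedy small-item filling (up to $\frac{3}{4}$). Unlike the cardinal case where every bundle only needs to reach size $\frac{2}{3}$, each bundle here must contribute to an actual covered-bin count, so a near-full bin of size just under $1$ yields zero credit rather than partial credit. This forces a more refined case enumeration on the residues of $|F|$ and $|H|$ in the columnar arrangement, paralleling but strictly extending the four-case structure underlying the proof of Lemma \ref{lem::cardinal-goods-l-1}.
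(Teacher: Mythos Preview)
Your plan diverges substantially from the paper's proof. The paper does \emph{not} reuse the Lone-Divider machinery of Section~\ref{sec:cover:cardinal}; it simply runs Round-Robin on the IDO instance and argues directly that the worst bundle $A_n$ satisfies $v_i(A_n)\ge \tfrac{3}{4}\kappa_i-\tfrac{7}{4}$ for every agent $a_i$. The item classification is also different: large means size $>\tfrac{1}{2}$ and medium means size in $(\tfrac{1}{3},\tfrac{1}{2}]$, not the $\tfrac{2}{3}$--$\tfrac{1}{3}$ split you inherit from Section~\ref{sec:cover:cardinal}. The central device is a \emph{synthesized-item} set $P_i$ obtained by pairing up medium items (so every element of $P_i$ has size $>\tfrac{2}{3}$), followed by a two-regime case split on $|P_i|$ versus $\tfrac{3}{4}\kappa_i-\tfrac{1}{2}$ and a careful accounting of how much small-item mass is needed to top each synthesized item up to a full bin. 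The Round-Robin structure lets the paper do a single one-shot size estimate rather than an inductive one.

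Your plan has a real gap at the step you flag as the ``main obstacle,'' and it is more serious than just tracking the constant $\tfrac{7}{4}$. In the cardinal proof, Lemma~\ref{lem::cardinal-goods-l-2} works because failing the $\tfrac{2}{3}$ test forces $s_i(B_j)<\kappa_i$: the column's large/medium items already fit into $\kappa_i$ bins of size at most $1$, and the only way to fail is to run out of small filler before any bin exceeds size $1$. Under your ordinal criterion, failing to cover $\tfrac{3}{4}\kappa_i-\tfrac{7}{4}$ bins gives no comparable bound. A matched part could, from $a_i$'s viewpoint, consist of many single-medium-item bins each of size just under $1$---contributing zero covered bins---together with too few small items to top any of them up; such a part can have total size approaching $\kappa_i$ while still failing the test. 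Consequently the ``total-size accounting argument on the items that remain'' cannot close as stated: the previously matched parts may have absorbed nearly $k\kappa_i$ mass without giving $a_i$ any credit, leaving insufficient mass for the current round. To make the Lone-Divider route work you would need an additional structural constraint on how the selecting agent builds her parts so that failure for any remaining agent forces a genuine size deficit---something well beyond the columnar arrangement alone.
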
 

For ease of presentation, in this and the next sections, for each agent $a_i$, let $L_i=\{ e_j\in M\mid s_i(e_j)>\frac{1}{2}\}$ be the set of \emph{large} items or $L$-item(s) for short
and $M_i=\{ e_j\in M\mid \frac{1}{2}\geq s_i(e_j)> \frac{1}{3} \}$ be the set of \emph{medium} items or $M$-item(s).
Moreover, we say an item $e$ is \emph{small} or $S$-item for $a_i$ if it is neither large nor medium (i.e., $s_i(e)\leq \frac{1}{3}$). 

\subsection{The Algorithm}
We employ the classic Round-Robin algorithm to compute the desired allocation $\mathbf{A} = (A_1,\ldots,A_n)$. 
That is, the agents sequentially pick items from the unallocated items, each time choosing the one with the largest size.
Since the instances we focus on are IDO, the items allocated to each agent $a_i$ are actually $A_i = \{e_{i+jn}\mid j\in \mathbb{N}, i+jn \le m\}$.
The technical contribution here is to show how each agent $a_i$ distributes her allocated items such that at least $\frac{3}{4}\kappa_i-\frac{7}{4}$ bins can be covered, i.e., $v_i(A_i) \ge \frac{3}{4}\kappa_i - \frac{7}{4}$ for every $a_i \in N$. 
By the characteristic of the Round-Robin algorithm, we have $v_i(A_i) \ge v_i(A_n)$ for every $a_i$. 
Thus, it suffices to prove that $v_i(A_n) \ge \frac{3}{4}\kappa_i - \frac{7}{4}$ holds for every $a_i$.
In the following, we shall focus on a fixed agent $a_i$ and the same arguments apply to the other agents. 

We first construct a set of synthesized items $P_i = \{o_1, \ldots, o_k\}$ (see Algorithm \ref{alg:goods:synthesized}), which consists of all $a_i$'s large items in $A_n$ and synthesizes as many pairs of medium items in $A_n$ as possible. 
The last remaining medium item (which exists when $|M_i\cap A_n|$ is odd) is either directly put into $P_i$ or combined with the smallest synthesized item, depending on the sum of its size and that of the smallest item in $P_i$. 
In the following, instead of directly considering the large and medium items in $A_n$, we will first distribute the synthesized items in $P_i$ and then replace them with their corresponding items in $A_n$. 

\begin{algorithm}[htbp]
\caption{Constructing Synthesized Items for Agent $a_i$}
\label{alg:goods:synthesized}
\begin{algorithmic}[1]
    \REQUIRE Bundle $A_n$, agent $a_i$'s large items $L_i$ and medium items $M_i$. 
    \ENSURE A set of synthesized items $P_i = \{o_1, \ldots, o_k\}$. 
    \STATE Initialized $P_i \leftarrow A_n \cap L_i$. 
    \WHILE{$|A_n \cap M_i| \ge 2$}
        \STATE Arbitrarily pick two items $e_a, e_b$ in $A_n \cap M_i$. 
        \STATE Synthesize $e_a$ and $e_b$ into one item $o$, i.e., $s_i(o) = s_i(e_a) + s_i(e_b)$. 
        \STATE $P_i \leftarrow P_i \cup \{o\}$, $A_n \leftarrow A_n \setminus \{e_a, e_b\}$. 
    \ENDWHILE
    \IF{$|A_n \cap M_i| = 1$}
        \STATE Let $e^*$ be the only item in $A_n \cap M_i$ and $o^*$ be the item with the smallest size in $P_i$. 
        \IF{$s_i(e^*) + s_i(o^*) \ge 1$}
            \STATE $P_i \leftarrow P_i \cup \{e^*\}$. 
        \ELSE
            \STATE Synthesize $e^*$ and $o^*$ into one item $o^\prime$, i.e., $s_i(o^\prime) = s_i(e^*) + s_i(o^*)$.
            \STATE $P_i \leftarrow P_i \setminus \{o^*\} \cup \{o^\prime\}$. 
        \ENDIF
    \ENDIF
    \STATE Sort the items in $P_i$ in decreasing order of the sizes and rename them as $o_1, \ldots, o_k$. 
    \RETURN $P_i$. 
\end{algorithmic}
\end{algorithm}

We then present how to cover bins using the items in $A_n$. 
We consider two cases with respect to the number of synthesized items in $P_i$. 

\smallskip
\noindent\textbf{\underline{Case 1}}: $k \le \frac{3}{4} \kappa_i - \frac{1}{2}$
\smallskip

For this case, we distribute the items in two steps: 
\begin{itemize}
    \item First, we distribute each of the synthesized items in $P_i$ to one bin and cover these bins using the small items in $A_n$. 
    \item Second, we greedily create new bins and cover them using the small items in $A_n$. 
\end{itemize}

\noindent\textbf{\underline{Case 2}}: $k > \frac{3}{4} \kappa_i - \frac{1}{2}$
\smallskip

For this case, we let $z = \lceil\frac{3}{4}\kappa_i - \frac{7}{4}\rceil$ and distribute the items in four steps: 
\begin{itemize}
    \item  First, we cover $k - z$ bins using the smallest $2\cdot (k - z)$ items in $P_i$. 
    \item Second, as long as there remain two items in $P_i$ whose sizes are at most $\frac{2}{3}$, we distribute the smallest two of them to cover a new bin. 
    \item Third, we distribute each of the remaining items in $P_i$ to one bin and cover these bins using the small items in $A_n$. 
    Note that here we specify the order of the bins to be covered, that is, in increasing order of the sizes of the $P_i$ items that they already have, starting from the one with the second smallest $P_i$ item. 
    We lastly cover the bin with the smallest $P_i$ item. 
    \item Fourth, we greedily create new bins and cover them using the small items in $A_n$. 
\end{itemize}

\subsection{The Proof of Theorem \ref{thm:cover:ordinal}}
We are now ready to prove Theorem \ref{thm:cover:ordinal}. 

\begin{proof}[Proof of Theorem \ref{thm:cover:ordinal}]
We will show that the above algorithm can cover at least $\frac{3}{4}\kappa_i - \frac{7}{4}$ bins for $a_i$ in both of the above two cases. 

\smallskip
\noindent\textbf{\underline{Case 1}}: $k \le \frac{3}{4} \kappa_i - \frac{1}{2}$. 

For this case, we first have the following claim. 
\begin{claim}
\label{clm:goods:case1:enough_small}
    There are enough small items in $A_n$ to cover the bins in the first step. 
\end{claim}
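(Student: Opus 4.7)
The plan is to reduce Claim~1 to a numerical lower bound on $s_i(A_n)$ via a greedy-fill analysis, simplify that bound using the Case~1 hypothesis, and then establish the simplified inequality using the MMS partition of $a_i$ together with the IDO Round-Robin structure.

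\textbf{Reduction to a size condition.} I would process the $k$ bins in a fixed order, and for each bin already holding some $o_j \in P_i$, greedily add small items from $A_n$ until the bin's total first reaches or exceeds $1$. Because every small item has size at most $1/3$, each covered bin overshoots its target by at most $1/3$, so the small items consumed while processing bin $j$ have total size at most $(1 - s_i(o_j)) + \tfrac{1}{3}$. Imposing that enough small items remain before processing each bin and tracking the resulting ``prefix deficit'' condition, the sharpest requirement is reached at the last bin and reads $s_i(A_n) \ge \tfrac{4k-1}{3}$. Substituting the Case~1 bound $k \le \tfrac{3}{4}\kappa_i - \tfrac{1}{2}$ yields $\tfrac{4k-1}{3} \le \kappa_i - 1$, so Claim~1 reduces to showing $s_i(A_n) \ge \kappa_i - 1$.

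\textbf{Lower-bounding $s_i(A_n)$.} I would establish $s_i(A_n) \ge \kappa_i - 1$ by a rank-matching argument against an MMS partition $(X_1, \ldots, X_n)$ of $a_i$, where each $s_i(X_j) \ge \kappa_i$. Since IDO Round-Robin gives $a_1$ the largest item and $a_n$ the smallest in every round, and $s_i(A_1) \ge s_i(M)/n \ge \kappa_i$, the difference $s_i(A_1) - s_i(A_n)$ equals a telescoping sum of within-round size gaps. An exchange argument that permutes items across rounds while preserving MMS bundle sizes then bounds this aggregate gap by the size of a single item (i.e., at most $1$), yielding $s_i(A_n) \ge s_i(A_1) - 1 \ge \kappa_i - 1$.

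\textbf{Main obstacle.} The main technical difficulty is controlling the aggregate within-round gap. A naive per-round bound can grow with the number of rounds and yields no useful inequality. The right insight is that the MMS constraint $s_i(X_j) \ge \kappa_i$ forces items to be distributed across bundles in a relatively balanced way, so only a constant-sized total mass can accumulate on the ``large-item side'' across all rounds. Making this precise---plausibly via a sequence of swaps between MMS sub-bundles and Round-Robin rounds---is the crux of the proof, and I expect the bookkeeping over the first few rounds (which contain the largest items) to drive the analysis.
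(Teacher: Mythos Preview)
Your reduction step is correct and essentially identical to the paper's: both observe that each covered bin carries total size strictly below $\tfrac{4}{3}$ (the last small item overshoots by at most $\tfrac{1}{3}$), so failure to cover all $k$ bins would force $s_i(A_n) < \tfrac{4}{3}(k-1)+1 = \tfrac{4k-1}{3} \le \kappa_i - 1$, and it therefore suffices to show $s_i(A_n) \ge \kappa_i - 1$.

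Where your plan goes astray is in the second half. The paper obtains the needed lower bound in one line: by the IDO Round-Robin structure, $s_i(A_n) \ge \tfrac{1}{n}\bigl(s_i(M) - (n-1)\bigr) \ge \kappa_i - 1 + \tfrac{1}{n}$, using only $s_i(M) \ge n\kappa_i$. No MMS-partition bookkeeping, no exchange argument, no ``balancedness'' of bundles is invoked. Your route via $s_i(A_1) - s_i(A_n) \le 1$ is also valid, but this inequality is likewise elementary and has nothing to do with the MMS partition: pair $e_{rn+1}\in A_1$ with $e_{rn}\in A_n$ for each $r\ge 1$; since $rn < rn+1$ in the sorted order, every such pair contributes nonpositively, leaving only $s_i(e_1)\le 1$ unpaired. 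Thus $s_i(A_1)-s_i(A_n)\le s_i(e_1)\le 1$, and combined with $s_i(A_1)\ge s_i(M)/n\ge \kappa_i$ you are done.

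So the ``main obstacle'' you identify is not an obstacle at all, and the proposed MMS-based exchange machinery is both unnecessary and left unspecified. Replace that entire paragraph with the one-line shift-by-one-round pairing (or simply cite the standard Round-Robin bound $s_i(A_n)\ge \tfrac{1}{n}(s_i(M)-(n-1))$ as the paper does), and your proof is complete and matches the paper's.
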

\begin{proof}[Proof of Claim~\ref{clm:goods:case1:enough_small}]
    For the sake of contradiction, suppose that we cannot cover all the bins in the first step. 
    Since the size of each small item is at most $\frac{1}{3}$ and the item that cover a bin in the first step is small, the total size that a covered bin in the first step is at most $\frac{4}{3}$. 
    Thus we have an upper bound of the total size of $A_n$ to $a_i$, i.e., $ s_i(A_n) < \frac{4}{3} \cdot (k-1) + 1 \le \kappa_i - 1$.
    However, by the design of the Round-Robin algorithm, we have the following lower bound of the total size of $A_n$ to $a_i$, 
    \begin{equation*}
        s_i(A_n) \ge \frac{1}{n} \cdot (s_i(M) - (n-1)) \ge \kappa_i - 1 + \frac{1}{n}, 
    \end{equation*}
    where the last inequality is because $s_i(M) \ge n \cdot \kappa_i$ by the definition of MMS, a contradiction. 
\end{proof}

By Claim \ref{clm:goods:case1:enough_small}, each bin in the first step is covered and has a total size of at most $\frac{4}{3}$. 
Since all items distributed in the second step are small, each bin in the second step also has a total size of at most $\frac{4}{3}$. 
Thus, letting $ALG$ be the number of covered bins in the above two steps, we have 
\begin{equation}
\label{eq:cover:ordinal:case1:ALG}
    \frac{4}{3} ALG + 1 > s_i(A_n). 
\end{equation}
Besides, by the design of the Round-Robin algorithm, we have the following lower bound of $s_i(A_n)$, 
\begin{equation}
\label{eq:cover:ordinal:case1:sAn}
    s_i(A_n) \ge \frac{1}{n} \cdot (s_i(M) - (n-1)) \ge \kappa_i - 1 + \frac{1}{n}, 
\end{equation}
where the last inequality is because the items $M$ can cover at least $n \cdot \kappa_i$ bins in $a_i$'s MMS partition. 
Combining Inequalities (\ref{eq:cover:ordinal:case1:ALG}) and (\ref{eq:cover:ordinal:case1:sAn}), we have 
\begin{equation*}
    v_i(A_n) \ge ALG > \frac{3}{4}\kappa_i - \frac{3}{2} > \frac{3}{4}\kappa_i - \frac{7}{4}, 
\end{equation*}
which completes the proof of Theorem \ref{thm:cover:ordinal} for Case 1. 

\smallskip
\noindent\textbf{\underline{Case 2}}: $k > \frac{3}{4}\kappa_i - \frac{1}{2}$
\smallskip

For this case, we define more notations here. 
Let $l$ be the number of undistributed items in $P_i$ after the first two steps.
Note that these $l$ items are the $l$ largest ones in $P_i$ (i.e., $o_1, \ldots, o_l$). 
For each $j \in [l]$, let $\Delta_j = 1 - s_i(o_j)$. 
Note that $\Delta_1 \le \dots \le \Delta_l$, $\Delta_{l-1} < \frac{1}{3}$,  and $\Delta_l$ may be at least $\frac{1}{3}$ but at most $\frac{2}{3}$. 
Then we have an upper bound of the total size of the items in $A_n$ that are large or medium for agent $a_i$. 
\begin{claim} 
\label{clm:goods:case2:large_medium_An}$s_i(A_n \cap (L_i \cup M_i)) \le \sum_{j=1}^{l-1}(1-\Delta_j) + (k-l+1) \cdot (1-\Delta_l)$.
\end{claim}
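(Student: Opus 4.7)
The plan is to show the claim is essentially a consequence of two simple facts: Algorithm~\ref{alg:goods:synthesized} preserves total size, and the synthesized items $o_1,\dots,o_k$ are sorted in non-increasing order of size.

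First I would verify that the synthesis procedure is size-preserving. Inspecting Algorithm~\ref{alg:goods:synthesized}, the initial $P_i$ equals $A_n\cap L_i$; each while-loop iteration replaces two medium items by a single synthesized item whose size is exactly their sum; and the handling of the final leftover medium item $e^*$ either appends $e^*$ to $P_i$ (size preserved) or replaces $o^*$ by a new item of size $s_i(e^*)+s_i(o^*)$ (again size preserved). Therefore
\begin{equation*}
s_i(A_n\cap(L_i\cup M_i)) \;=\; \sum_{j=1}^{k} s_i(o_j).
\end{equation*}

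Next I would split the sum at index $l$. Using $s_i(o_j)=1-\Delta_j$ for $j\in[l]$, I get
\begin{equation*}
\sum_{j=1}^{k} s_i(o_j) \;=\; \sum_{j=1}^{l-1}(1-\Delta_j) \;+\; (1-\Delta_l) \;+\; \sum_{j=l+1}^{k} s_i(o_j).
\end{equation*}
Because the items in $P_i$ are indexed in decreasing order of size, for every $j>l$ we have $s_i(o_j)\le s_i(o_l)=1-\Delta_l$, so
\begin{equation*}
\sum_{j=l+1}^{k} s_i(o_j) \;\le\; (k-l)(1-\Delta_l).
\end{equation*}
Substituting this into the previous identity yields exactly
\begin{equation*}
s_i(A_n\cap(L_i\cup M_i)) \;\le\; \sum_{j=1}^{l-1}(1-\Delta_j) + (k-l+1)(1-\Delta_l),
\end{equation*}
which is the desired bound. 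There is no substantive obstacle here; the only point requiring care is a clean case check that each branch of Algorithm~\ref{alg:goods:synthesized} (in particular the leftover-medium branch) preserves the sum of sizes, after which the rest is a one-line monotonicity argument using the sorted order of $P_i$.
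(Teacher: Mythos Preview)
Your proposal is correct and follows essentially the same approach as the paper: both arguments use that Algorithm~\ref{alg:goods:synthesized} preserves total size so that $s_i(A_n\cap(L_i\cup M_i))=\sum_{j=1}^k s_i(o_j)$, and then bound each $s_i(o_j)$ with $j>l$ above by $s_i(o_l)=1-\Delta_l$ using the sorted order of $P_i$. Your version is slightly more explicit in verifying size preservation branch by branch, but the structure of the argument is identical.
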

\begin{proof}[Proof of Claim~\ref{clm:goods:case2:large_medium_An}]
    Recall the process of constructing the set $P_i$ and we know that the total size of the items in $P_i$ equals to the total size of the large and medium items in $A_n$. 
    We can derive the upper bound of the total size of $P_i$ as the right side of the inequality by scaling the sizes of $o_{l+1}, \ldots, o_k$ to that of $o_l$ (i.e., $1 - \Delta_l$) and summing up these scaled sizes and the sizes of $o_1, \ldots, o_l$. 
\end{proof}

We further consider two subcases. 

\smallskip
\noindent\underline{\textbf{Subcase 2-1}}: No bin is created in the second step
\smallskip

For this case, there are $2z - k$ undistributed items in $P_i$ after the first two steps, i.e., $l = 2z - k$. 
We are now going to show that there are enough small items in $A_n$ to cover the bins in the third step. 
If it is, together with the $k - z$ covered bins in the first step, we have at least $z = \lceil\frac{3}{4}\kappa_i - \frac{7}{4}\rceil$ covered bins in total, which will prove Theorem \ref{thm:cover:ordinal} for Subcase 2-1.  

We first provide a lower bound of the total size of the small items in $A_n$ to agent $a_i$. 

\begin{claim}
\label{clm:goods:case2:small}
    $s_i(A_n \cap S_i) \ge \sum_{j=1}^{l-1} \Delta_j + (k - l + 1) \cdot \Delta_l + \kappa_i - k - \frac{4}{3} + \frac{4}{3n}$.
\end{claim}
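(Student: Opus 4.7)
My plan is to derive the lower bound on $s_i(A_n\cap S_i)$ by combining two ingredients: the standard round-robin IDO guarantee on $s_i(A_n)$, which was already used in (\ref{eq:cover:ordinal:case1:sAn}) for Case~1, together with the upper bound on $s_i(A_n\cap (L_i\cup M_i))$ furnished by Claim~\ref{clm:goods:case2:large_medium_An}. Writing $s_i(A_n\cap S_i) = s_i(A_n) - s_i(A_n\cap(L_i\cup M_i))$ reduces the claim to a direct arithmetic manipulation.

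First I would recall, or give a short self-contained derivation of, the round-robin lower bound. Because the instance is IDO and $a_n$ picks last in every round, $A_n$ consists of items at positions $n,2n,3n,\ldots$; each item outside $A_n$ can be paired with either an item of $A_n$ from the same or preceding round (via IDO, it is no larger) or, in the first round, bounded trivially by $1$. Accounting correctly, even when the final round is partial, gives
\[
s_i(A_n) \;\ge\; \frac{s_i(M)-(n-1)}{n} \;\ge\; \kappa_i - 1 + \tfrac{1}{n},
\]
where the second inequality uses $s_i(M)\ge n\kappa_i$ from the definition of the MMS value in the bin covering model.

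Next I would plug in Claim~\ref{clm:goods:case2:large_medium_An}. Observe that $(l-1)+(k-l+1)=k$, so the upper bound in that claim equals $k - \sum_{j=1}^{l-1}\Delta_j - (k-l+1)\Delta_l$. Subtracting this from the round-robin bound yields
\[
s_i(A_n\cap S_i) \;\ge\; \sum_{j=1}^{l-1}\Delta_j + (k-l+1)\Delta_l + \kappa_i - k - 1 + \tfrac{1}{n}.
\]
Finally, to match the exact form stated in the claim, I would weaken the constant term via $-1+\tfrac{1}{n} \ge -\tfrac{4}{3}+\tfrac{4}{3n}$; this is equivalent to $\tfrac{n-1}{3n}\ge 0$, which holds trivially. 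The reason for stating the weaker constant $-\tfrac{4}{3}+\tfrac{4}{3n}$ rather than the tighter $-1+\tfrac{1}{n}$ is almost certainly that it matches the maximum possible per-bin size of $\tfrac{4}{3}$ used elsewhere in the section (any bin formed by adding small items of size at most $\tfrac{1}{3}$ until coverage has total size strictly less than $\tfrac{4}{3}$), which will make the downstream accounting uniform.

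The main obstacle is simply being careful with the round-robin bound when the final round is incomplete (i.e., $m\not\equiv 0\pmod n$, so $a_n$ may pick no item in the last round); the pairing argument still goes through, because any residual items in the last incomplete round can be bounded against $a_n$'s most recent pick, which is at least as large by IDO. Everything else is routine arithmetic.
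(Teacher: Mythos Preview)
Your approach is correct and in fact more direct than the paper's. You obtain the tighter bound
\[
s_i(A_n\cap S_i)\ \ge\ \sum_{j=1}^{l-1}\Delta_j + (k-l+1)\Delta_l + \kappa_i - k - 1 + \tfrac{1}{n},
\]
and then weaken the constant to match the stated claim.

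The paper proceeds differently: rather than subtracting Claim~\ref{clm:goods:case2:large_medium_An} directly from the global round-robin bound on $s_i(A_n)$, it first upper-bounds $s_i(L_i\cup M_i)$ over the \emph{entire} item set $M$ via the IDO structure (namely $s_i(L_i\cup M_i)\le (n-1) + n\cdot s_i(A_n\cap(L_i\cup M_i))$, then invokes Claim~\ref{clm:goods:case2:large_medium_An}), uses $s_i(M)\ge n\kappa_i$ to lower-bound $s_i(S_i)$ over all of $M$, and only then applies a round-robin bound \emph{restricted to small items}, i.e.\ $s_i(A_n\cap S_i)\ge \tfrac{1}{n}\bigl(s_i(S_i)-\tfrac{1}{3}(n-1)\bigr)$. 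Because that last step uses the $\tfrac{1}{3}$ size cap specific to small items, the computation lands naturally on the constant $-\tfrac{4}{3}+\tfrac{4}{3n}$ without any separate weakening. So your guess about the origin of the constant is slightly off: it is not chosen to harmonize with the $\tfrac{4}{3}$ bin-size bound used elsewhere, but is simply the constant the paper's detour produces. Your route is shorter, reuses Inequality~(\ref{eq:cover:ordinal:case1:sAn}) verbatim, and even yields a strictly better additive term; the paper's route exposes intermediate bounds on $s_i(L_i\cup M_i)$ and $s_i(S_i)$, though these are not reused later.
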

\begin{proof}[Proof of Claim~\ref{clm:goods:case2:small}]
    We first derive an upper bound of the total size of the items in $M$ that are large or medium for $a_i$. 
    Recall that $A_n = \{e_{jn}\mid j\in \mathbb{N}, jn \le m\}$. 
    For each large or medium item $e_{jn}$ in $A_n$, let $Q_j$ be the set of the items between $e_{jn}$ and $e_{(j+1)n}$ (or $e_m$ if $(j+1)n > m$), i.e., $Q_j = \{e_{jn+t} \mid 1 \le t \le n-1, jn+t \le m \}$. 
    Notice that the large and medium items in $A_n$ and the items in their corresponding sets $Q$s cover all the items in $M$ that are large or medium for $a_i$ except the first $n-1$ largest items. 
    By the characteristic of the Round-Robin algorithm, we know that each item in $Q_j$ has a size of at most $e_{jn}$. 
    Therefore, we have that 
    \begin{align*}
        s_i(L_i \cup M_i) &\le n-1 + n\cdot s_i(A_n \cap (L_i \cup M_i)) \\
        &\le n - 1 + n\cdot\sum_{j=1}^{l-1}(1-\Delta_j) + n\cdot(k-l+1) \cdot (1-\Delta_l), 
    \end{align*}
    where $n-1$ is the upper bound of the total size of the first $n-1$ largest items and the second inequality is due to Claim \ref{clm:goods:case2:large_medium_An}. 
    Then we have a lower bound of total size of the items in $M$ that are small for $a_i$, 
    \begin{align*}
        s_i(S_i) = s_i(M) - s_i(L_i \cup M_i) &\ge n\kappa_i - n + 1 - n\cdot\sum_{j=1}^{l-1}(1-\Delta_j) - n\cdot(k-l+1) \cdot (1-\Delta_l) \\
        &= n\cdot\sum_{j=1}^{l-1}\Delta_j + n\cdot(k-l+1) \cdot \Delta_l + (\kappa_i - k)n + 1 - n.
    \end{align*}
    By the Round-Robin algorithm, we have that 
    \begin{align*}
        s_i(A_n \cap S_i) \ge \frac{1}{n}\cdot (s_i(M \cap S_i) - \frac{1}{3}(n-1)) \ge \sum_{j=1}^{l-1} \Delta_j + (k - l + 1) \cdot \Delta_l + \kappa_i - k - \frac{4}{3} + \frac{4}{3n}, 
    \end{align*}
    which completes the proof of the claim. 
\end{proof}

Next, consider when the largest small item in $A_n$ has a size smaller than $\Delta_{l-1}$. 
In order to cover a bin with size $1-\Delta$, the total size of small items we need does not exceed $\Delta+\Delta_{l-1}$. 
Therefore, the total size of small items we need to cover the bins in the third step does not exceed
\[
    \sum_{j=1}^l{(\Delta _j + \Delta_{l-1})} = \sum_{j=1}^{l-1}\Delta_j + l\cdot \Delta_{l-1} + \Delta_l. 
\]
By Claim \ref{clm:goods:case2:small}, it suffices to show that 
\begin{align*}
    \sum_{j=1}^{l-1} \Delta_j + (k - l + 1) \cdot \Delta_l + \kappa_i - k - \frac{4}{3} 
    \ge \sum_{j=1}^{l-1}\Delta_j + l\cdot \Delta_{l-1} + \Delta_l, 
\end{align*}
which is equivalent to $l\cdot \Delta_{l-1} - (k - l) \cdot \Delta_l \le \kappa_i - k - \frac{4}{3}$. 
Since $\Delta_l \ge \Delta_{l-1}$ and $k \ge l$, it suffices to show 
\begin{align*}
    (2l - k)\cdot \Delta_{l-1} \le \kappa_i - k - \frac{4}{3} \Leftrightarrow  \Delta_{l-1} \le \frac{\kappa_i - k - \frac{4}{3}}{4z - 3k}. 
\end{align*}
Since $\Delta_{l-1} < \frac{1}{3}$, it suffices to show 
\begin{align*}
    \Delta_{l-1} \le \frac{\kappa_i - k - \frac{4}{3}}{4z - 3k} \ge \frac{1}{3} \Leftrightarrow z = \lceil\frac{3}{4}\kappa_i - \frac{7}{4}\rceil \le \frac{3}{4}\kappa_i - 1, 
\end{align*}
which is true since $\kappa_i$ is an integer. 

Now, consider when the largest small item in $A_n$ has a size at least $\Delta_{l-1}$.
More generally, let $t$ be the largest integer in $[l-1]$ such that the $j$-th largest small item in $A_n$ has a size at least $\Delta_{l-j}$ for every $j \in [t]$. 
This means that for every $j \in [t]$, the $j$-th largest small item alone is enough to cover the bin with item $o_{l-j}$. 
Since the sizes of the other small items are smaller than $\Delta_{l-t-1}$, the total size of small items we need to cover other bins that have already had a size of $1-\Delta$ does not exceed $\Delta + \Delta_{l-t-1}$. 
Therefore, in order to cover the bins in the third step, the total size of small items we need does not exceed 
\begin{align*}
    \frac{t}{3} + \sum_{j=1}^{l-t-1}(\Delta _j + \Delta_{l-t-1}) + \Delta_l + \Delta_{l-t-1}  =\frac{t}{3} + \sum_{j=1}^{l-t-1}\Delta _j 
    + (l-t)\cdot \Delta_{l-t-1} + \Delta_l, 
\end{align*}
where $\frac{t}{3}$ is the upper bound of the total size of the first $t$ largest small items in $A_n$. 
By Claim \ref{clm:goods:case2:small}, it suffices to show that 
\begin{align*}
    \sum_{j=1}^{l-1} \Delta_j + (k - l + 1) \cdot \Delta_l + \kappa_i - k - \frac{4}{3} \ge \frac{t}{3} + \sum_{j=1}^{l-t-1}\Delta _j + (l-t)\cdot \Delta_{l-t-1} + \Delta_l, 
\end{align*}
which is equivalent to 
\[
    (l-t)\cdot \Delta_{l-t-1} - \sum_{j=l-t}^{l-1}\Delta _j - (k-l) \cdot \Delta_l \le  \kappa_i - k - \frac{4}{3} - \frac{t}{3}. 
\]
Since $\Delta_{l-t-1} \le \cdots \le \Delta_l$, it suffices to show that
\begin{align*}
    (l - t) \cdot \Delta_{l-t-1} - t \cdot \Delta_{l-t-1} - (k-l) \cdot \Delta_{l-t-1} \le \kappa_i - k - \frac{4}{3} - \frac{t}{3},
\end{align*}
which is equivalent to 
\[
\Delta_{l-t-1} \le \frac{\kappa_i - k - \frac{4}{3} - \frac{t}{3}}{4z - 3k - 2t},
\]
which is true since $\Delta_{l-t-1} < \frac{1}{3}$ and $\frac{\kappa_i - k - \frac{4}{3} - \frac{t}{3}}{4z - 3k - 2t} > \frac{1}{3}$.

\smallskip

\noindent\underline{\textbf{Subcase 2-2}}: Some bins are created in the second step. 
\smallskip
For this case, first observe that there are enough small items in $A_n$ such that all the bins in the third step can be satisfied. 
This is because compared to Subcase 2-1, there are now less undistributed items in $P_i$ after the first two steps and the lower bound of the total size of $a_i$'s small items in $A_n$ does not change (i.e., Claim \ref{clm:goods:case2:small} still applies here). 
Also observe that the size of each item in $P_i$ that is distributed to a bin in the first two steps is at most $\frac{2}{3}$. 
This means that each covered bin in the first two steps has a size of at most $\frac{4}{3}$. 
Note that the total size of the items distributed to each covered in the last two steps is obviously at most $\frac{4}{3}$ since the last item given to each bin is a small one. 
Following the same arguments in Case 1, we conclude that there are at least $\frac{3}{4} \kappa_i - \frac{7}{4}$ covered bins. 
\end{proof}
We remark that although the analysis uses MMS partitions and valuations, the algorithm does not require any oracle and runs in polynomial time with respect to $n$ and $m$.

\section{Ordinal Approximation for the Bin Packing Model}\label{sec::chore}

In this section, we consider the ordinal approximation for the bin packing model and present the existence of $\frac{4}{3}$-OMMS allocations.

\begin{theorem}\label{thm::4/3-chore}
    For any bin packing instance, a $\frac{4}{3}$-OMMS allocation exists.
    % For indivisible chores, $(\frac{4}{3},3)$-MMS allocations always exist and can be found in polynomial time, given the values of MMS.
\end{theorem}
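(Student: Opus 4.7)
The plan is to apply a Round-Robin style algorithm (as in the bin covering sections) and then sharpen the analysis for each agent $a_i$. By the reduction lemma we may restrict attention to IDO instances, and under IDO the Round-Robin allocation $A_j = \{e_{j+kn} : k \ge 0\}$ is monotone in the sense that $c_i(A_i) \le c_i(A_1)$ for every agent $a_i$: the $k$-th item of $A_i$ is pointwise dominated by the $k$-th item of $A_1$ under IDO, so any packing of $A_1$ induces a packing of $A_i$ with no more bins. Thus it suffices to show $c_i(A_1) \le \tfrac{4}{3}\kappa_i + c$ for some constant $c$, for every agent $a_i$.

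Next I would introduce a refined item classification for the fixed agent $a_i$, which is the technical departure from prior work. In addition to the classical cutoffs at $1/2$ (large) and $1/3$ (medium), I would further split items of size $\le 1/3$ at $1/4$, so that items of size in $(1/4, 1/3]$ fit at most three per bin (bin fill $> 3/4$) while items of size $\le 1/4$ can be packed four or more per bin (bin fill $> 3/4$ after greedy first-fit). The purpose of this finer split is to ensure that every bin in the final packing is filled to at least $3/4$ of capacity except for an $O(1)$ number of exceptional bins; this is precisely the source of the $\tfrac{4}{3}$ ratio, since a packing with average fill $\ge 3/4$ uses at most $\tfrac{4}{3} s_i(A_1) + O(1)$ bins.

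I would then use the structure of $a_i$'s MMS partition to bound the counts of each class in $A_1$. Since $a_i$'s MMS partition consists of $n\kappa_i$ bins, each containing at most one large item, and either (i) at most one additional medium item if a large is present or (ii) at most two medium items if no large is present, one obtains $|L_i| + \tfrac{1}{2}|M_i| \le n\kappa_i$ up to an additive constant. Together with Round-Robin picking every $n$-th item in IDO order, this yields $|A_1 \cap L_i| \le \kappa_i + O(1)$, $|A_1 \cap M_i| \le \kappa_i + O(1)$, and $s_i(A_1) \le \kappa_i + O(1)$ (the last using $s_i(M) \le n\kappa_i$ and the standard Round-Robin swap argument).

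Finally, I would pack $A_1$ by (a) pairing each large with a medium while possible (bin size $> 5/6$), (b) pairing remaining mediums two per bin (bin size $> 2/3$), (c) packing mid-size smalls three per bin (bin size $> 3/4$), and (d) absorbing the tiniest items via First-Fit-Decreasing into remaining slacks of existing bins and then into new bins. A careful case analysis should show that, apart from an $O(1)$ number of bins (at most one leftover per class plus a boundary term), every bin reaches fill $\ge 3/4$, giving the desired bound. The main obstacle I anticipate is the corner case in which $|A_1 \cap L_i|$ is large but $|A_1 \cap M_i|$ and the supply of size-$\le 1/4$ items are both small, so that large-only bins (fill barely above $1/2$) cannot be topped up; I expect to dispatch this by exploiting the size budget $s_i(A_1) \le \kappa_i + O(1)$, which in that regime forces $|A_1 \cap L_i|$ itself to be close to $\kappa_i$ and hence the count bound $L_1 \le \kappa_i + O(1)$ already suffices, via a dedicated sub-case split.
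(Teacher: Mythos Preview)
Your fill-based counting argument has a structural gap that the ``dedicated sub-case split'' you flag at the end does not cover. The plan hinges on all but $O(1)$ bins reaching fill at least $\tfrac{3}{4}$, but step~(b) produces two-medium bins whose fill can be as low as $\tfrac{2}{3}+2\epsilon$, and there can be $\Theta(\kappa_i)$ of them: take $2n\kappa_i$ items of size $\tfrac{1}{3}+\epsilon$, so $|A_1\cap M_i|=2\kappa_i$ and every bin from step~(b) is barely above $\tfrac{2}{3}$. Likewise step~(a) can overflow (a large of size $0.9$ with a medium of size $0.45$ exceeds $1$); if you restrict to feasible pairs, the leftover large items sit alone at fill just above $\tfrac{1}{2}$, again $\Theta(\kappa_i)$ many in the worst case. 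In both scenarios the true cost is $\kappa_i$, so the theorem holds, but your counting only certifies $\tfrac{3}{2}\kappa_i$. A related error: the inequality $|L_i|+\tfrac{1}{2}|M_i|\le n\kappa_i$ is false (a bin with one large of size $0.6$ and one medium of size $0.4$ contributes $\tfrac{3}{2}$ to the left side but only $1$ to the right), and your claimed $|A_1\cap M_i|\le\kappa_i+O(1)$ is off by a factor of two.

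The paper's route is different in both algorithm and analysis. It does not attempt a fill lower bound for large/medium bins; instead it decomposes $H_i=L_i\cup M_i$ by \emph{pairability} rather than size: writing $c_i(H_i)=|H_i|-k$, the $2k$ smallest items of $H_i$ form $H_i^\ast$ and can be packed two per bin, and it is $|H_i'|+\tfrac{1}{2}|H_i^\ast|=c_i(H_i)\le n\kappa_i$ that holds. From this the paper proves directly that the $H_i$-items landing in any single column pack into at most $\kappa_i+2$ unit bins (Lemma~\ref{lem::ak-llllast-1}), a count bound that entirely sidesteps the low-fill cases. Small items are then handled not by Round-Robin but by a bag-filling allocation ensuring $s_i(A_i)\le\kappa_i+\tfrac{1}{3}$, after which a redistribute-and-extract step moves one small item out of each overfull bin and packs the extracted items three per new bin, costing $\lceil(\kappa_i-4)/3\rceil$ extra. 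Your Round-Robin framework could plausibly be married to this analysis (at the price of a slightly worse additive constant, since Round-Robin only gives $s_i(A_1)\le\kappa_i+1$), but the $3/4$-fill route as you have sketched it will not close.
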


\subsection{The Algorithm}
We now informally introduce the algorithm. First, hypothetically arrange all items into $n$ columns (also called $n$-arrangement) based on the descending order of size, such that the $j$-th column contains items $e_j,e_{j+n},e_{j+2n},\ldots$ for every $j\in [n]$.
The algorithm proceeds in $n$ rounds. In each round $j$, we focus on the items in $j$-th column and let a selected agent leave with a bag of items, which is determined through two phases: bag initialization and bag filling. 
In the bag initialization phase (Line \ref{step:chore:3/2:1}), based on the ascending order of item indices, items from the underlying column are added to the bag as long as they are large for at least one remaining agent.
In the bag filling phase (Line \ref{step:chore:3/2:2}), if there exist agents who believe that the total size of items in the bag is no greater than $\frac{1}{n}$ fraction of the total size of all items and still have unallocated small items,
let one of them be the selected agent and add the smallest unallocated item into the bag.
At the end of each round, the selected agent leaves with all items in the bag. The formal description is provided in Algorithm \ref{alg:bin-packing-4/3}.
Our algorithm borrows the idea of the algorithm in \citet{DBLP:conf/nips/0037WZ23}, but we classify large and medium items differently, which necessitates different approaches in the analysis.

\begin{algorithm}[ht!]
 \caption{Computing $\frac{4}{3}$-OMMS allocations}
 \label{alg:bin-packing-4/3}
 \begin{algorithmic}[1]
  \REQUIRE A bin packing instance $\cI=(N,M,\bs)$.
  \ENSURE Allocation $\mathbf{A} = (A_1,\ldots,A_n)$.
  %s.t. {\color{red}$\delta^c_i(A_i) \le \frac{4}{3} \kappa_i+2$}, $\forall i$. % $N_i\in \cN$.
        \STATE Initialize $R\gets M$.
        \FOR{$j = 1, \ldots, n$} \label{step:chore:4/3-for}
        % \STATE \textbf{for} $j$ from $1$ to $n$ \textbf{do} 
        \STATE Initialize $B_j\gets \emptyset$, $t\gets j$, $a_k\gets \emptyset$.
        \WHILE{$e_t\in R$ and there exists agent $a_i$ such that $e_t \in L_i\cup M_i$}\label{step:chore:3/2:1}
        \STATE $B_j\gets B_j\cup \left\{ e_t \right\}$, $R\gets R\setminus \left\{ e_t \right\}$, $t\gets t+n$.
        \STATE $a_k\gets a_i$.
        \ENDWHILE
        \WHILE{there exists agent $a_i$ such that $s_i(B_j)\leq \frac{1}{n}s_i(M)$ and there still exist unallocated $S$-items for agent $a_i$}\label{step:chore:3/2:2}
        \STATE $a_k\gets a_i$.
        \STATE Let $e\in R$ be the item with the largest index and $B_j\gets B_j\cup \{ e \}$, $R\gets R\setminus \left\{ e \right\}$.
        \ENDWHILE
        \STATE $A_k\gets B_j$, $ N\gets N\setminus \{ a_k \}$.
        \ENDFOR
 \end{algorithmic}
\end{algorithm}

\subsection{The Proof of Theorem \ref{thm::4/3-chore}}
We will first prove that all items can be allocated. Before that, let us present several propositions regarding the Algorithm \ref{alg:bin-packing-4/3}.
\begin{proposition}\label{prop:ak-add1}
    For an agent $a_i$, suppose that $e^*$ is the last item added to $A_i$. If $e^*$ is small for $i$, then $s_i(A_i\setminus \{ e^* \}) \leq \frac{1}{n}\cdot s_i(M)$.
\end{proposition}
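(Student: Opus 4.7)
The plan is to trace the algorithm to identify in which of the two phases the item $e^*$ was appended to $B_j$, where $B_j$ denotes the bag constructed in the round in which $a_i$ became the selected agent $a_k$. Since a single item is added per iteration in both the initialization loop (Line~\ref{step:chore:3/2:1}) and the filling loop (Line~\ref{step:chore:3/2:2}), ``the last item added to $A_i$'' corresponds to an unambiguous iteration, and we just need to rule out initialization and then read off the claim from the filling loop's entry condition.

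Under the hypothesis that $e^*$ is small for $a_i$, I would first argue that $e^*$ cannot have been appended in the initialization phase of round $j$. Indeed, the initialization loop updates $a_k$ only to an agent for whom the newly added item $e_t$ lies in $L_{a_k}\cup M_{a_k}$. If no iteration of the filling loop were executed, then the final value of $a_k$ would be the agent recorded in the last iteration of initialization, and the last item appended to $B_j$ would be large or medium for this $a_k = a_i$. This contradicts the assumption that $e^*$ is small for $a_i$. Hence the filling loop runs at least one iteration, the final value of $a_k$ is set inside the filling loop, and $e^*$ is precisely the item appended in the last iteration of that loop.

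Now I focus on that final filling iteration. At its top, before the assignment $a_k \gets a_i$ and before the appending of $e^*$, the while-condition guarantees the existence of a remaining agent with $s_i(B_j)\le \tfrac{1}{n}\, s_i(M)$ that still has unallocated $S$-items; the algorithm sets $a_k$ to such an agent, which becomes $a_i$. At that instant, the content of $B_j$ equals $A_i \setminus \{e^*\}$, because $e^*$ is the only item appended afterwards and no further item is added to $B_j$ in this round (by the choice of $e^*$ as the last item). Substituting, $s_i(A_i \setminus \{e^*\}) = s_i(B_j) \le \tfrac{1}{n}\, s_i(M)$, which is exactly the inequality to be proved.

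The only real subtlety is the case-split that rules out $e^*$ being added during initialization; the rest follows immediately from reading the filling loop's guard. I therefore expect no computational obstacle, and the proof should be short, consisting of the case analysis on the phase in which $e^*$ was added followed by a direct invocation of the filling-loop precondition.
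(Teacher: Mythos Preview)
Your proposal is correct and follows the same approach as the paper: both argue that $e^*$ must have been added in the bag-filling phase and then read off the inequality from the while-condition at Line~\ref{step:chore:3/2:2}. Your version is simply more explicit about why $e^*$ cannot come from the initialization phase (tracing how $a_k$ is updated), whereas the paper asserts this in one sentence.
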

\begin{proof}
    Since $e^*$ is small for $a_i$, then $e^*$ must be added to $A_i$ in the bag filling phase.
    By Line \ref{step:chore:3/2:2}, agent $a_i$ has total size no greater than $\frac{1}{n}s_i(M)$ right before adding $e^*$, and hence, $s_i(A_i\setminus \{ e^* \}) \leq \frac{1}{n}\cdot s_i(M)$.
\end{proof}

\begin{proposition}\label{prop:ak-add2}
    For agents $a_i$ and $a_j$, suppose that $a_j$ receives her bundle $A_j$ in some round earlier than $a_i$. If after the round in which $a_j$ receives $A_j$, there remains unallocated $S$-items for $a_i$, then $s_i(A_j) > \frac{1}{n}s_i(M)$.
\end{proposition}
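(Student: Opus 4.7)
The plan is to argue directly from the termination condition of the bag-filling while-loop (Line~\ref{step:chore:3/2:2}) in the round in which $a_j$ is selected. Let this be round $r$, so $B_j = A_j$ at the end of that round. Both $a_i$ and $a_j$ are present in $N$ at the beginning of round $r$ (since $a_j$ is selected in round $r$ and $a_i$ is selected in a later round), and by assumption $a_i$ still has unallocated $S$-items even after round $r$ ends; in particular, $a_i$ has unallocated $S$-items throughout round $r$.

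First I would consider the case in which the while-loop in Line~\ref{step:chore:3/2:2} is executed at least once during round $r$. The loop exits only when no remaining agent $a_\ell$ simultaneously satisfies (i) $s_\ell(B_j)\le \tfrac{1}{n}\,s_\ell(M)$ and (ii) having unallocated $S$-items. Applying this termination condition to $a_\ell = a_i$ (who is still remaining at this moment, because she will be selected only in a later round, and who satisfies (ii) by assumption), condition (i) must fail for her, giving $s_i(A_j) = s_i(B_j) > \tfrac{1}{n}\,s_i(M)$.

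Next I would handle the case in which the while-loop in Line~\ref{step:chore:3/2:2} is not entered at all during round $r$. Then $A_j$ is exactly the bag $B_j$ constructed in the initialization phase (Line~\ref{step:chore:3/2:1}) of round $r$. The only reason the while-loop is skipped is that, at the moment it is checked, no remaining agent satisfies both (i) and (ii). Exactly as above, $a_i$ is a remaining agent satisfying (ii), so (i) must fail for her: $s_i(A_j) = s_i(B_j) > \tfrac{1}{n}\,s_i(M)$.

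The only subtlety — and thus the place to be careful — is making sure that $a_i$ really is a ``remaining agent'' at the instant the while-loop terminates or is skipped. But this is immediate from the hypothesis that $a_j$ is selected strictly before $a_i$: the set $N$ in the algorithm only shrinks at the very end of the round via $N \gets N\setminus \{a_k\}$, so $a_i\in N$ throughout round $r$. With this observation in place, there are no further calculations: the proposition follows from a single-line contrapositive of the loop guard, applied to $a_i$.
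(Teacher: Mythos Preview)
Your argument is correct and is exactly the approach the paper takes: the paper simply states that Propositions~\ref{prop:ak-add1} and~\ref{prop:ak-add2} ``directly follow from the condition in Line~\ref{step:chore:3/2:2} of Algorithm~\ref{alg:bin-packing-4/3},'' and your write-up is a careful unpacking of that one-line justification. The two-case split and the remark that $a_i$ remains in $N$ throughout round $r$ are both sound and make explicit what the paper leaves implicit.
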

These propositions directly follow from the condition in Line \ref{step:chore:3/2:2} of Algorithm \ref{alg:bin-packing-4/3}.

\begin{lemma}\label{lem::ak-add1}
    All items are allocated at the termination of Algorithm \ref{alg:bin-packing-4/3}.
\end{lemma}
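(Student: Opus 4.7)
The plan is a proof by contradiction. Suppose $R^*$, the set of items still unallocated when the for-loop ends, is nonempty, and let $a_k$ be the agent selected in round $n$ (so $A_k = B_n$). I will (i) show that every item of $R^*$ is small for $a_k$, and (ii) combine this with Proposition~\ref{prop:ak-add2} and the round-$n$ stopping condition to force $\sum_{j=1}^{n} s_k(A_j) > s_k(M)$, contradicting the partition identity $\sum_{j=1}^{n} s_k(A_j) + s_k(R^*) = s_k(M)$ with $s_k(R^*) \ge 0$.

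For step (i), pick $e^* \in R^*$ and let $j \in [n]$ be its column. Since $e^*$ is never allocated, it lies in $R$ throughout, including at the start of round $j$'s bag-initialization, which processes $e_j, e_{j+n}, e_{j+2n}, \ldots$ in increasing order of index and halts at the first $e_{t_0}$ that is either (a) no longer in $R$, or (b) small for every remaining agent. Case (a) can be excluded: bag-filling always removes the item of largest index currently in $R$, so had some column-$j$ item $e_{t_0}$ of smaller index than $e^*$ been snatched by an earlier bag-filling step, then $e^*$ -- still in $R$ at that moment and of strictly larger index -- would have been snatched first, contradicting $e^* \in R^*$. Hence (b) holds, and by IDO every later column-$j$ item (including $e^*$) is no larger than $e_{t_0}$ for every agent, so $e^*$ is small for every agent remaining at the start of round $j$. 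In particular, since $a_k$ is remaining throughout, $e^*$ is small for $a_k$.

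For step (ii), the fact that every item of $R^*$ is an $S$-item for $a_k$ means $a_k$ has unallocated $S$-items at the end of every round. Applying Proposition~\ref{prop:ak-add2} with $a_i = a_k$ and each $a_j$ with $j \ne k$ (each such $a_j$ is finalized before round $n$) yields $s_k(A_j) > \tfrac{1}{n} s_k(M)$. For $a_k$ herself, the bag-filling loop of round $n$ cannot have terminated by exhaustion of small items, because $R^*$ still contains some; thus it terminated via the size condition, giving $s_k(A_k) = s_k(B_n) > \tfrac{1}{n} s_k(M)$ as well. Summing over $j \in [n]$ produces $\sum_{j=1}^{n} s_k(A_j) > s_k(M)$, the desired contradiction.

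I expect the main obstacle to be the monotone-indexing argument in step (i): one must cleanly rule out the possibility that bag-initialization of round $j$ halted early because some earlier column-$j$ item had already been removed from $R$ by a prior bag-filling. The resolution hinges on two facts used simultaneously -- bag-filling is greedy on the largest index, and $e^*$ persists until termination -- so no item in $e^*$'s column with smaller index can have disappeared via bag-filling while $e^*$ was still present. Once this structural point is settled, IDO propagates smallness down the column, and Proposition~\ref{prop:ak-add2} together with the round-$n$ stopping condition reduce the remainder to a one-line averaging contradiction (and Proposition~\ref{prop:ak-add1} is not needed for this lemma).
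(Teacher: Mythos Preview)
Your proof is correct and follows essentially the same approach as the paper: both pivot on Proposition~\ref{prop:ak-add2} applied to the last agent and the round-$n$ termination condition, with you framing it as a contradiction while the paper does a direct two-case analysis at the start of round $n$. Your step~(i) makes explicit the column-indexing argument (why any leftover item must be small for the last agent) that the paper's Case~1 asserts without justification, so your write-up is in fact more rigorous on that point.
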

\begin{proof}
    Suppose that $a_n$ is the last agent to receive a bundle. 
    If at the beginning of round $n$ (Line \ref{step:chore:4/3-for}), all unallocated items are large or medium for $a_n$, then all of them will be added to the bag in Line \ref{step:chore:3/2:1} and then will be allocated to $a_n$ after the bag filling phase.
    Focus on the case when, at the beginning of round $n$,
    % of \texttt{for} loop in Line \ref{step:chore:4/3-for}, 
    there exist unallocated $S$-items for $a_n$. By Proposition \ref{prop:ak-add2}, it holds that for each $j\in [n-1]$, $s_n(A_j)>\frac{1}{n} s_n(M)$, which implies that the total size of remaining items for $a_n$ at the beginning of round $n$ is less than $\frac{1}{n} s_n(M)$.
    Therefore, by Line \ref{step:chore:3/2:2}, all items will be added to the bag and then allocated to $a_n$.
\end{proof}

Next we present the performance guarantee of the returned allocation. Focus on agent $a_i$ and define $H_i:=L_i\cup M_i$. Suppose that $c_i(H_i)=|H_i|-k$. 
Then we claim that the $2k$ items with the smallest size in $H_i$ can be packed into $k$ bins as follows: for each $\ell \in [k]$, pack the item with the $\ell$-th smallest size and the item with the $(2k+1-\ell)$-th smallest size into a bin.
To see the claim, we begin with a solution achieving $c_i(H_i)$ and exchange items as follows: if there exist $e,e'$ such that $s_i(e)<s_i(e')$ with $e$ packed solely in a bin and $e'$ packed with another item, then exchange their positions in the solution.
Then it is easy to see that the $2k$ items with the smallest size in $H_i$ can be packed into $\kappa_i$ bins.

Let $H^*_i\subseteq H_i$ be the set of $2k$ items with the smallest size in $H_i$, and let $H'_i:=H_i\setminus H^*_i$. We below prove that items in $A_i\cap H_i$ can be packed into $\kappa_i+2$ bins.

\begin{lemma}\label{lem::ak-llllast-1}
    For agent $a_i$, it holds that $c_i(A_i\cap H_i) \leq \kappa_i + 2$.
\end{lemma}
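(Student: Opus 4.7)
The plan is twofold: (i) characterize $A_i \cap H_i$ as a prefix of a single column in the $n$-arrangement, and (ii) pack it by adapting the largest-with-smallest pairing from the preamble with one sacrificial singleton to absorb the column-to-column shift.

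For (i), let $j^*$ denote the round in which $a_i$ is matched. I claim $A_i \cap H_i = \{e_{j^*}, e_{j^*+n}, \ldots, e_{j^*+(q-1)n}\}$ where $q = |A_i \cap H_i|$. The argument is: any item added in a bag-filling iteration while $a_i$ still has unallocated $S$-items in $R$ must, by IDO, have size at most $\frac{1}{3}$ for $a_i$ and so cannot lie in $H_i$; meanwhile, during round $j^*$'s bag-initialization, $a_i \in N$ provides the large/medium witness, so each column-$j^*$ $H_i$-item still in $R$ is added. IDO then pins the $H_i$-portion of $A_i$ to a prefix of column $j^*$.

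For (ii), invoke the preamble's decomposition $H_i = H_i' \cup H_i^*$ with $|H_i^*| = 2k$. A short monotonicity argument extends the pairing: whenever $r, r' \in [|H_i'|+1, |H_i|]$ and $r + r' \geq |H_i'| + |H_i| + 1$, one has $h_r + h_{r'} \leq 1$, because $h_{r'} \leq h_{|H_i'|+|H_i|+1-r}$ and the latter is the pair-partner of $h_r$. Enumerate $A_i \cap H_i$ as $b_1 \geq \ldots \geq b_q$, with $\gamma = |A_i \cap H_i'|$ and $\beta = q - \gamma$. Pack $b_1, \ldots, b_\gamma$ as singletons, sacrifice $b_{\gamma+1}$ as one more singleton, and pair the remainder as $(b_{\gamma+1+t}, b_{q+1-t})$. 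Each such pair has rank-sum $(j^* + \gamma n) + (j^* + q n) \geq |H_i'| + |H_i| + 2$ (from the definitions of $\gamma$ and $q$: $b_\gamma$ is the last $H_i'$-entry in column $j^*$ and $b_q$ is the last column-$j^*$ entry in $A_i \cap H_i$), and hence fits by the extended inequality.

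The bin count is $\gamma + \lceil (\beta+1)/2 \rceil \leq (\gamma + q)/2 + 1$. Using $\gamma \leq \lceil |H_i'|/n \rceil$ and $q \leq \lceil |H_i|/n \rceil$ from the column structure, together with $|H_i'| + |H_i| = 2 c_i(H_i) \leq 2n\kappa_i$, we obtain $(\gamma + q)/2 + 1 \leq c_i(H_i)/n + 2 \leq \kappa_i + 2$. The main obstacle is closing the rank-sum gap in step (ii): the naive largest-with-smallest pairing of $A_i \cap H_i^*$ can fall short of the threshold $|H_i'| + |H_i| + 1$ by up to $n-1$ when $j^*$ is a small column index, and the sacrificial singleton is precisely what shifts every remaining pair's rank-sum by exactly $n$ and so clears the gap. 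The $+2$ slack in the final bound reflects this sacrifice together with the rounding loss inherent in extracting a $1/n$-sub-grid of $H_i$.
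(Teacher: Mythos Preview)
Your argument follows the paper's route exactly: split $A_i \cap H_i$ into its $H_i'$- and $H_i^*$-portions, pack the former as singletons, sacrifice the top $H_i^*$-item as one more singleton, pair the rest second-with-last, third-with-second-last, and so on, then count. Your final inequality $(\gamma+q)/2 + 1 \leq \kappa_i + 2$ is a cosmetic rearrangement of the paper's $\lfloor |H_i'|/n \rfloor + \lceil \lfloor |H_i^*|/n \rfloor / 2 \rceil + 2 \leq \kappa_i + 5/2$, and your extended rank-sum observation is precisely what the paper hides behind ``one can verify.''

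There is, however, a gap in your rank-sum verification. To conclude $2j^* + (\gamma+q)n \geq |H_i'| + |H_i| + 2$ you need $j^* + qn \geq |H_i| + 1$, i.e.\ $e_{j^*+qn} \notin H_i$; in other words, $b_q$ must be the last $H_i$-item of column $j^*$, not merely the last entry of $A_i \cap H_i$. These coincide only if every column-$j^*$ $H_i$-item survived into round $j^*$. Your part (i) treats bag-filling \emph{within} round $j^*$ but says nothing about earlier rounds: if in some round $j' < j^*$ the bag-filling loop (triggered by an agent $a_\ell \neq a_i$ who still has $S$-items) reaches down into column-$j^*$ $H_i$-items, then $q$ falls short of the full column count and $j^* + qn \leq |H_i|$. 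In that regime the pair $(b_{\gamma+2}, b_q)$ can genuinely overflow---for instance when both lie in $L_i \cap H_i^*$. The paper's proof is no more careful at this point (it too asserts the pairing works ``due to the construction of the $n$-arrangement''), so you have inherited rather than introduced the gap; but since your stated aim is to make the verification explicit, you should either argue that this truncated-column scenario is incompatible with $a_i$ being the recipient of $B_{j^*}$, or supply a separate packing for it.
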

\begin{proof}
    We create a packing solution for $A_i\cap H_i$. For each $e\in A_i\cap H'_i$, it will be solely packed into a bin. 
    For items in $A_i\cap H^*_i$, due to the construction of the $n$-arrangement, one can verify that the second largest size item can be packed into a bin with the $t$-th largest size item; the third largest size item can be packed into a bin with the $(t-1)$-th largest size item; and so on, where $t=|A_i\cap H^*_i|$. 
    Thus, excluding the largest size item in $A_i\cap H^*_i$, we can pack the rest items in $\lceil \frac{|A_i\cap H^*_i|-1}{2} \rceil$ bins, and therefore, $c_i(A_i\cap H^*_i) \leq \lceil \frac{|A_i\cap H^*_i|-1}{2} \rceil + 1$.

    The construction of $n$-arrangement implies that $|A_i \cap H'_i|\leq \lfloor \frac{|H'_i|}{n} \rfloor + 1 $ and $|A_i \cap H^*_i|\leq \lfloor \frac{|H^*_i|}{n} \rfloor + 1 $. Then we have
    $$
    \begin{aligned}
    c_i(A_i\cap H_i) & \leq |A_i\cap H'_i| + \lceil \frac{|A_i\cap H^*_i|-1}{2} \rceil + 1 \\
    & \leq \lfloor \frac{|H'_i|}{n} \rfloor + \lceil \frac{\lfloor \frac{|H^*_i|}{n} \rfloor}{2} \rceil + 2 \\
    & \leq \frac{1}{n}\cdot \left( |H'_i| + \frac{|H^*_i|}{2} \right) + \frac{5}{2}.
    \end{aligned}
    $$
    As $|H'_i| + \frac{|H^*_i|}{2} = c_i(H_i) \leq \kappa_i n$ and $c_i(A_i\cap H_i)$ is an integer, the above inequality implies $c_i(A_i\cap H_i) \leq \kappa_i+2$.
\end{proof}

Now we are ready to prove Theorem \ref{thm::4/3-chore}. 

\begin{proof}[Proof of Theorem \ref{thm::4/3-chore}]
If $A_i$ does not contain $S$-items for $a_i$, then by Lemma \ref{lem::ak-llllast-1}, agent $a_i$ is 1-OMMS. Then we focus on the case when $A_i$ contains $S$-items for $a_i$ and let $e^*$ be the last $S$-item added to $A_i$ in the bag filling phase.

We now pack items in $A_i$ into $\kappa_i+2$ bins, such that the total size of items in a bin is at most $\frac{4}{3}$. Consider the following way of packing
\begin{itemize}
    \item Step 1: Pack large and medium items in $A_i$ into $\kappa_i+2$ bins, each with a total size of at most 1.
    \item Step 2: If there exists a bin with total size no greater than 1, add $S$-items to it until the total size just exceeds 1. Repeat the process until all $S$-items are allocated.
\end{itemize}
Lemma \ref{lem::ak-llllast-1} ensures the packing solution for Step 1. For Step 2, by Proposition \ref{prop:ak-add1}, we have $s_i(A_i\setminus \{e^* \}) \leq \frac{1}{n}s_i(M) = \kappa_i$, implying $s_i(A_i)\leq \kappa_i+\frac{1}{3}$ as $e^*$ is small.
Thus, after allocating all $S$-items, the total size of each bin is at most $\frac{4}{3}$; otherwise, $s_i(A_i)>\frac{4}{3}(\kappa_i+2)$, a contradiction.

Begin with the constructed $\kappa_i+2$ bins. If there exists a bin with total size at most $\frac{2}{3}$, reassign an $S$-item from the bin with total size larger than 1 to it. Repeat the process, and at the end, one of the following happens: (i) each bin has total size at most 1, or (ii) each bin has total size at least $\frac{2}{3}$.
If (i) happens, then the theorem statement holds.
If (ii) happens, we first prove that at the end, there are at most $\kappa-4$ bins, each with total size larger than 1.
Assume for a contradiction. As each bin has total size at least $\frac{2}{3}$, the total size of items in $A_i$ is 
$$
s_i(A_i)>(\kappa_i-3)\times 1+5\times \frac{2}{3} = \kappa_i+\frac{1}{3} \geq s_i(A_i),
$$
deriving a contradiction.
For each bin with total size larger than 1, we remove the largest size $S$-item from the bin, ensuring that the total size is at most 1 after the removal (by Step 2).
Then we pack these removed $S$-items into extra bins, each containing 3 small items. Since each $S$-item has size at most $\frac{1}{3}$, each of these bins has total size no greater than 1.
Given that there are at most $\kappa_i-4$ removed $S$-items, we have
$$
    c_i(A_i) \leq \kappa_i+2 + \lceil \frac{\kappa_i-4}{3} \rceil \leq \frac{4}{3}\kappa_i+\frac{4}{3},
$$
completing the proof.
\end{proof}

The algorithm does not require any oracle and runs in polynomial time with respect to $n$ and $m$.
We remark that the $\frac{4}{3}$-CMMS allocations in \citep{DBLP:conf/icml/LiWX} do not translate to $\frac{4}{3}$-OMMS allocations.
In a $\frac{4}{3}$-CMMS allocation, each agent $i$'s bundle can be packed into $\kappa_i$ bins with capacity $\frac{4}{3}$. 
If in a $\frac{4}{3}$-CMMS allocation, each agent $i$'s bundle can be packed such that: whenever the total size of a bin exceeds 1, there exists a small item in the bin whose removal reduces the total item size to at most 1,
then removing these small items and packing them into $\frac{1}{3}\kappa_i$ bins with capacity 1 yields a $\frac{4}{3}$-OMMS allocation.
The $\frac{4}{3}$-CMMS allocation computed in \citep{DBLP:conf/icml/LiWX} do not possess the property just described, and therefore, cannot be converted into $\frac{4}{3}$-OMMS allocations.

\section{Conclusion}
We investigate the cardinal approximation of MMS fair allocation for groups.
From a technical standpoint, our paper contributes novel analyses for both goods and chores. Our result for chores improves the best-known approximation in \cite{DBLP:conf/nips/0037WZ23}.
On a conceptual level, our work provides a more general fair allocation model with non-atomic agents which generalizes the traditional model with atomic agents.
Although the individuals' valuations can be integrated into a unified utility function of a group, the model uncovers several additional insights.
First, the utility function (i.e., the types of social welfare) of each group can be adjusted for different scenarios. 
Second, the group model provides a new dimension to relax MMS fairness, which does not modify the value of MMS and is natural in the situation when we want to maximize the number of satisfied agents. 
Third, we assume the agents in a group have the same valuation function. 
This assumption is natural since agents in the same group typically encounter similar conditions and resources that shape their preferences.
This assumption also builds the relationship between our model and the atomic model with bin-packing and bin-covering valuation functions.
Looking ahead, a promising direction for future research involves extending our model to accommodate heterogeneous valuations.

\bibliographystyle{named}
\bibliography{wine}

\end{document}